\def\BibTeX{{\rm B\kern-.05em{\sc i\kern-.025em b}\kern-.08em
    T\kern-.1667em\lower.7ex\hbox{E}\kern-.125emX}}
\newtheorem{theorem}{Theorem}
\newtheorem{lemma}{Lemma}
\newtheorem*{lemma*}{Lemma}
\newtheorem*{claim*}{Claim}
\newtheorem{problem}{Problem}
\theoremstyle{definition}
\newtheorem{definition}{Definition}
\newtheorem{assumption}{Assumption}
\theoremstyle{remark}
\newcommand{\alg}{{BDTD}~}
\newcommand{\TwoNorm}[1]{\left\|#1\right\|}
\DeclareMathOperator*{\argmax}{arg\,max}
\theoremstyle{plain}
\theoremstyle{definition}
\theoremstyle{remark}
\begin{document}

\title{On the Hardness of Decentralized Multi-Agent Policy Evaluation under Byzantine Attacks}

\author{Hairi$^{\dag}$ \mbox{\hspace{0.4cm}} Minghong Fang$^{\P}$ \mbox{\hspace{0.4cm} Zifan Zhang$^{\ddag}$ \mbox{\hspace{0.4cm}}  Alvaro Velasquez$^{*}$ \mbox{\hspace{0.4cm}} Jia Liu$^{\S}$ \mbox{\hspace{0.4cm}}} 
\\ $^{\dag}$Dept. of Computer Science, University of Wisconsin-Whitewater
\\ $^{\P}$Dept. of Computer Science and Engineering, University of Louisville
\\ $^{\ddag}$Dept. of Computer Science, North Carolina State University
\\ $^{*}$Dept. of Computer Science, University of Colorado, Boulder
\\ $^{\S}$Dept. of Electrical and Computer Engineering, The Ohio State University
\thanks{
In Proceedings of the 22nd International Symposium on Modeling and Optimization in Mobile, Ad hoc, and Wireless Networks (WiOpt 2024). \\
Hairi and Minghong Fang are co-primary authors. 
Correspondence to: Hairi <hairif@uww.edu>, Minghong Fang <minghong.fang@louisville.edu>.
}
}

\maketitle

\begin{abstract}
In this paper, we study a fully-decentralized multi-agent policy evaluation problem, which is an important sub-problem in cooperative multi-agent reinforcement learning, in the presence of up to $f$ faulty agents. In particular, we focus on the so-called Byzantine faulty model with model poisoning setting. In general, policy evaluation is to evaluate the value function of any given policy. In cooperative multi-agent system, the system-wide rewards are usually modeled as the uniform average of rewards from all agents.  
We investigate the multi-agent policy evaluation problem in the presence of Byzantine agents, particularly in the setting of heterogeneous local rewards. Ideally, the goal of the agents is to evaluate the accumulated system-wide rewards, which are uniform average of rewards of the normal agents for a given policy. It means that all agents agree upon common values (the consensus part) and furthermore, the consensus values are the value functions (the convergence part).
However, we prove that this goal is not achievable.
Instead, we consider a relaxed version of the problem, where the goal of the agents is to evaluate accumulated system-wide reward, which is an appropriately weighted average reward of the normal agents.
We further prove that there is no correct algorithm that can guarantee that the total number of positive weights exceeds $|\mathcal{N}|-f $, where $|\mathcal{N}|$ is the number of normal agents. 
Towards the end, we propose a Byzantine-tolerant decentralized temporal difference algorithm that can guarantee asymptotic consensus under scalar function approximation. We then empirically test the effective of the proposed algorithm.
\end{abstract}

\begin{IEEEkeywords}
Multi-agent policy evaluation, Byzantine attack, Temporal difference learning
\end{IEEEkeywords}

\section{Introduction}
Reinforcement learning (RL) \cite{SutBar_18} is a powerful paradigm in learning sequential decision-making. The success of RL both in theory \cite{AgaKakShaMah_21,JinZhuBub_18,SriYin_19,Sut_88,TsiVan_99,SutMcASin_99,XuWanLia_20} and practice \cite{SilHubJul_18,ShaShaSha_16,FoeFarAfo_18,Lin_22} has sparked the interest in the realm of multi-agent reinforcement learning (MARL) \cite{ZhaYanLiu_18,ZhaYanBas_21,JinLiuWan_21}. MARL \cite{Sha_53,ZhaYanLiu_18} is a multi-agent setting, a natural extension of single-agent RL, where agents interact within a common environment. The state dynamics and individual rewards are affected by both the global state and joint actions. Based on the system objective, there are in general two main categories of MARL problems, cooperative \cite{ZhaYanLiu_18} and competitive \cite{Sha_53} settings. Based on the assumption of the system infrastructure, there are also two categories, centralized setting and fully decentralized setting respectively. 
More specifically, in a fully-decentralized multi-agent setting, agents are only able to share information with each other through a communication network instead of a central server. In contrast, in a server-present centralized system, the server can collect and aggregate local information and disseminate appropriate information to agents (see an excellent survey \cite{ZhaYanBas_21} of MARL topics for further details). The focus of this paper is the cooperative and decentralized setting as in \cite{ZhaYanLiu_18}, where all agents work together to maximize a common goal. 

Similar to the single-agent RL setting, a complete MARL algorithm searches for a certain optimal policy $\pi^{*}$ that can maximize accumulated system-wide average reward, i.e.,
\begin{align}
\pi^{*}=\argmax_{\pi} \mathbb{E}\bigg[\sum_{t=0}^{\infty}\gamma^{t}\sum_{i=1}^{n}\frac{1}{n}r^{i}_{t+1}\bigg], \nonumber
\end{align}
where $n$ is the number of agents in the system, $\gamma$ is a discount factor with $0<\gamma<1$ and the expectation is subject to the usual caveats about appropriate expectations existing in steady-state. We note that in a cooperative multi-agent system, the system-wide reward is typically modeled as the uniform average of all agents. An important subproblem is to study the multi-agent policy evaluation for a given policy $\pi$, as this can be incorporated into the actor-critic framework as the critic step. The goal of all agents, in this subproblem, is to learn the value functions defined as:
\begin{align}
V(s)=\mathbb{E}\bigg[\sum_{t=0}^{\infty}\gamma^{t}\sum_{i=1}^{n} \frac{1}{n} r^{i}(s_t,a_t)|s_0=s\bigg], \nonumber
\end{align}
for all states $s\in\mathcal{S}$. This implies that 1) all agents need to reach consensus and 2) the consensus values are the value functions defined above. The multi-agent policy evaluation problem has been studied extensively in fault-free setting \cite{DoaMagRom_19,DoaMagRom_21,ZhaYanLiu_18,HaiLiuLu_22,CheZhoChe_21}.

We study a fully decentralized multi-agent policy evaluation problem in the presence of Byzantine agents. In addition, we consider a multi-agent system where up to $f>0$ agents are Byzantine.
Specifically, we explore the model poisoning faulty setting described in~\cite{CheSuXu_17,FanCaoJia_20,SuVai_16}, where Byzantine agents could send arbitrary or carefully crafted information to their neighboring agents.
%
%
In a fully decentralized system, it is typical for agents to share certain system parameters in order to achieve consensus as described above. However, Byzantine agents have the ability to modify these local parameters to arbitrary values, thereby disrupting the algorithm.
Furthermore, it is important to highlight that in a fully decentralized system, a Byzantine agent can transmit inconsistent information to its neighbors. This means that a Byzantine agent can send different values to different neighbors. This presents a significant challenge compared to the centralized server-based setting, where a Byzantine agent can only send a single piece of information to the server.
The existing literature in multi-agent reinforcement learning (MARL) lacks a comprehensive study on robust designs, particularly in heterogeneous settings that consider these challenges.

In this work, we investigate the multi-agent policy evaluation problem in the presence of Byzantine agents for any given policy $\pi$. Ideally, the goal of the agents is to evaluate the accumulated uniform average reward of the normal agents. Specifically, let $\mathcal{N}$ denotes the set of normal agents in the system, the decentralized multi-agent policy evaluation is to characterize the following value at any states $s$ for the given policy $\pi$:
\begin{align}
V(s)=\mathbb{E}\bigg[\sum_{t=0}^{\infty}\gamma^{t}\sum_{i\in\mathcal{N}} \frac{1}{|\mathcal{N}|} r^{i}(s_t,a_t)|s_0=s\bigg].
\label{eq: uniform_weight_value_function}
\end{align}

However, we will prove later in Theorem~\ref{thm: byzn_unif_ave} that evaluating Eq.~\eqref{eq: uniform_weight_value_function} {\em cannot} be achieved.
Thus, we consider a {\em relaxed} version of the multi-agent policy evaluation problem. In this relaxed problem, the goal of the agents is to evaluate accumulated weighted average reward, which can be written as:
\begin{align}
V(s)=\mathbb{E}\bigg[\sum_{t=0}^{\infty}\gamma^{t}\sum_{i\in\mathcal{N}} \alpha_i r^{i}(s_t,a_t)|s_0=s\bigg],
\label{eq: alpha_weight_value_function}
\end{align}
where $\alpha_i\ge 0$ for all $i\in\mathcal{N}$ and $\sum_{i\in\mathcal{N}}\alpha_i=1$.
We further prove that for the case $f>0$, there is {\em no} correct algorithm that can evaluate Eq.~\eqref{eq: alpha_weight_value_function} with $\sum_{i\in\mathcal{N}}\mathbf{1}\{\alpha_i>0\}> |\mathcal{N}|-f $, where $|\mathcal{N}|$ and $f$ are number of normal and the maximum number of Byzantine agents. In other words, achieving more than $|\mathcal{N}|-f$ positive weights in the relaxed problem is impossible in general.
In the end, we propose a Byzantine-tolerant decentralized temporal difference (BDTD) algorithm under linear scalar function approximation that can guarantee that all normal agents reach consensus.

The contributions of this paper are  threefolds:
\begin{list}{\labelitemi}{\leftmargin=1em \itemindent=-0.0em \itemsep=.1em}
\item First, we prove in Theorem \ref{thm: byzn_unif_ave} that evaluating the exact value functions defined by the uniform average reward of the agents in the presence of Byzantine is impossible. In other words, there is no correct algorithm that can achieve the value functions where system-wide rewards are modeled as the uniform average rewards of all normal agents in the presence of Byzantine agents. 
We further relax the problem to consider solving value function where system-wide rewards are modeled as appropriately weighted average rewards of the normal agents. 

\item Second, we further prove in Theorem \ref{thm: byzn_wei_ave} that there is no correct algorithm that can guarantee the number of positive weights exceeds $|\mathcal{N}|-f$ for the aforementioned relaxed problem.

\item Last but not least, we propose a decentralized multi-agent policy evaluation algorithm with linear scalar function approximation, so that all normal agents can reach consensus.
\end{list}


\section{Related work}

\subsection{Fault-free policy evaluation}
Policy evaluation, which aims to evaluate how good a given policy is, is an important sub-problem in designing a complete RL algorithm, which can be incorporated into the actor-critic framework as the critic step. 
Temporal difference (TD) learning \cite{Sut_88} is a simple yet effective learning algorithm first proposed in the single-agent setting to evaluate a given policy. 
The convergence theory in TD learning has been developed first in asymptotic regime \cite{TsiVan_99,TsiVan_02} and then in finite-time horizon \cite{BhaRusSin_18,SriYin_19,XuWanLia_20}.

The multi-agent policy evaluation, based on distributed TD learning, has been recently studied \cite{DoaMagRom_19,DoaMagRom_21,WuSheChe_21}. 
Various aspects of fully-decentralized MARL algorithms have been studied.
Notably, the sample and communication efficiencies of actor-critic algorithms have been investigated in \cite{HaiLiuLu_22,CheZhoChe_21,LiuWeiYin_22,HaiZhaLiu_24}.  

\subsection{Distributed Learning with Byzantine Agents}
Byzantine agents with local model poisoning attack is a common modeling for robust design of distributed algorithm design.
A large body of papers~\cite{CheSuXu_17,YinCheKan_18,FanCaoJia_20,fang2022aflguard,blanchard2017machine,he2022byzantine,xie2018generalized,karimireddy2020byzantine,karimireddy2021learning} in the literature have adopted it as a common failure model in federated learning problem, where a server is involved to facilitate the collaborative learning process within the supervised setting. In robust algorithm design, one feature that is different from fault-free counterpart is to design robust filtering mechanism.
For instance, 
in the Krum aggregation rule, as described by~\cite{blanchard2017machine}, the server receives local models from agents and selects one received local model that has the smallest distance to its subset of neighbors as the output.
In~\cite{cao2020fltrust}, a key system assumption is that the server holds a trusted dataset.
The server maintains a server model based on the current global model and its trusted dataset. Upon receiving one local model from any agent, the server considers this received local model as benign if it is positive related to the server model.
Recently work in \cite{CheZhaZha_23,FanMaDai_21} studied the effect of Byzantine agents in the so-called federated reinforcement learning (FRL) framework, where a central server is assumed to be present. 
However, we note that FRL and MARL differ significantly in that FRL is a multiple independent identical learner and the action from one agent does not affect the outcomes of other agents. In contrast, the global state transition and local rewards are dependent upon joint actions in MARL. In \cite{CheZhaZha_23}, the results are further extended the results to the offline setting.
The closest related work \cite{WuSheChe_21} studied the policy value evaluation in the presence of Byzantine agents for a given policy. However, the analysis implicitly assumes the setting of homogeneous rewards, i.e. the rewards for all agents are the same. In our work, we consider a more general heterogeneous reward setting. 
The offline competitive MARL has been studied in \cite{WuMcMZhu_22}, where the data poisoning fault model is considered. Specifically, the rewards in the offline data are adversarially changed so that the new Nash equilibrium learned from the poisoned data is significantly different from the Nash equilibrium learned from the original data.

There are a series of works \cite{SuVai_15a,SuVai_15c,SuVai_16} on decentralized optimization problems where the local objective functions are heterogeneous and convex. 
An important subproblem in both our work and work in decentralized optimization \cite{SuVai_15a,SuVai_15c,SuVai_16} is decentralized consensus, meaning all agents are required to agree with each other. Existing work in \cite{Vai_12,VaiTseLia_12} have focused on these fundamental problems and proposed $f$-trimmed-mean-based algorithms. A recent paper \cite{FanZhaHai_24} has investigated on the topic of Byzantine-robust decentralized federated learning.
\section{Byzantine Policy Evaluation in Multi-Agent Reinforcement Learning} \label{sec:formulation}
Throughout this paper, $\|\cdot\|$ denotes the $\ell_2$-norm for vectors and the $\ell_2$-induced norm for matrices. $|\cdot|$ denotes cardinality of a set/multi-set or the absolute value of a scalar.
$(\cdot)^{T}$ denotes the transpose for a matrix or a vector.

{\bf 1) System model:}
Consider a multi-agent system with $n$ agents, including up to $f$ agents to be Byzantine agents.
We denote the set of Byzantine agents as $\mathcal{F}$.
Note that the actual number of Byzantine agents in the system can be smaller than $f$. 
We consider the scenario that all $n$ agents are connected through a complete graph, where each edge serves as a communication channel that allows agents to send information to their neighbors.
Later, we will show that our impossibility results hold even for this most ideal setting.

\begin{definition}[Networked Multi-Agent MDP]
Let the communication network be a complete graph. A networked multi-agent MDP is defined by following tuple $(\mathcal{S},\{\mathcal{A}^{i}\}_{i=1}^{n},P,\{r^{i}\}_{i=1}^{n},\gamma)$, where $\mathcal{S}$ is the global state space observed by all agents, $\mathcal{A}^{i}$ is the action set for agent $i$,
$P:\mathcal{S}\times\mathcal{A}\times\mathcal{S}\to[0,1]$ is a global state transition kernel, $r^{i}: \mathcal{S}\times\mathcal{A}$ is the local reward function for agent $i$, and $\gamma\in(0,1)$ is the discount factor. 
Let $\mathcal{A}=\prod_{i\in\mathcal{N}}\mathcal{A}^i$ be the joint action set of all agents. \label{def: model}
\end{definition}

In this paper, we assume that the global state space $\mathcal{S}$ is finite.
We also assume that at any given time $t\ge0$, all agents can observe the current global state $s_t$. 
$r^{i}(s,a)$ is individual agent $i$'s reward given global state $s$ and joint action $a$.
For simplicity of the presentation, we assume that the rewards are deterministic. 
Even in this simple setting, we will show that our impossibility results hold, let alone for more general stochastic reward settings.
We consider policies that are stationary. 
In our MARL system, each agent chooses its action following its local policy $\pi^i$ that is conditioned on the current global state $s$, i.e., $\pi^{i}(a^{i}|s)$ is the probability for agent $i$ to choose an action $a^{i}\in\mathcal{A}^{i}$. 
Then, the joint policy $\pi: \mathcal{S}\times\mathcal{A}\to[0,1]$ can be written as $\pi(a|s)=\prod_{i\in\mathcal{N}}\pi^{i}(a^{i}|s)$. 
For any given policy $\pi$, the global value function for all $s\in\mathcal{S}$ is defined as follows: $V(s)= \mathbb{E}_{s\sim d_{\pi},a\sim \pi(\cdot|s)} [ \sum_{t=0}^{\infty}\frac{\gamma^{t}}{N}\sum_{i\in\mathcal{N}} r^{i}(s_t,a_t)
| s_0=s ]$,
where $d_{\pi}(\cdot)$ is the steady state distribution induced by $\pi$. The existence of such distribution is guaranteed by the Assumption \ref{ass: dis}.

\begin{definition}[Byzantine Networked Multi-Agent MDP]
A Byzantine networked multi-agent MDP is a networked multi-agent MDP as defined in Definition~\ref{def: model} with up to $f$ Byzantine agents, who may send arbitrary information when sharing to the neighboring agents. 
\label{def: byzantine_model}
\end{definition}

We note that in the modelling of the Byzantine agents, the agents still strictly follow the sampling policies and receives true data from the environment. However, the Byzantine behavior appears in the communication process with neighboring agents when sending value function information. One can see such modelling in Algorithm \ref{alg: Byzantine}.

{\bf 2) Technical assumptions:}
We now state the following assumptions for the decentralized multi-agent MDP described above. 
\begin{assumption}
For any policy $\pi$, the induced Markov chain $\{s_t\}_{t\ge0}$ is irreducible and aperiodic.
\label{ass: dis}
\end{assumption}

\begin{assumption}
The reward $r^{i}_{t+1}$ is uniformly bounded by a constant $r_{\max}>0$,  $\forall i\in [n]$ and $t\ge0$.
\label{ass: r_bou}
\end{assumption}

\begin{assumption}
Each agent $i$'s value function is parameterized by linear functions, i.e., $V(s;w)=\phi(s) w$, where $\phi(s)\in \mathbb{R}^{d}$ is a feature vector for state $s\in\mathcal{S}$. The feature matrix $\Phi\in \mathbb{R}^{|\mathcal{S}|\times d}$ is a full-rank matrix. The feature vectors $\phi(s)$ are uniformly bounded for any $s\in\mathcal{S}$. Without loss of generality, we assume that $\|\phi(s)\| \le 1$. 
\label{ass: fea}
\end{assumption}

\begin{assumption} \label{ass: Byzantine_upper_bound}
The total number of agents $n$ and the maximum number of Byzantine agents $f$ has the following inequality $n\ge 3f+1.$
\end{assumption}

Assumption~\ref{ass: dis} guarantees that there exists a stationary distribution $d_{\pi}(\cdot)$ over $\mathcal{S}$ for the Markov chain induced by the given policy $\pi$. Assumption~\ref{ass: r_bou} is common in the RL literature (see, e.g., \cite{ZhaYanLiu_18,XuWanLia_20,DoaMagRom_19}) and easy to be satisfied in many practical MDP models with finite state and action spaces.
Assumption~\ref{ass: fea} on features is standard and has been widely adopted in the literature, e.g., \cite{TsiVan_99,ZhaYanLiu_18,SriYin_19}, for linear function approximations. 
Assumption \ref{ass: Byzantine_upper_bound} is a standard assumption in decentralized Byzantine consensus problem as in \cite{Vai_12}.
\section{General results in Byzantine faulty multi-agent policy evaluation} \label{sec: gen_res_byz}

In this section, we start with the scope of the problems that we consider to facilitate the later discussions on our impossibility results. 
First, we introduce the Byzantine-free multi-agent policy evaluation problem \cite{ZhaYanLiu_18,DoaMagRom_19,HaiLiuLu_22,CheZhoChe_21}. 
Note that the stochastic convergence we are referring to in this paper is \emph{expected mean-squared convergence} as in Byzantine-free setting \cite{SriYin_19,DoaMagRom_19,HaiLiuLu_22,CheZhoChe_21}. 
{\bf 1) Byzantine-free multi-agent policy evaluation problem:}
\begin{problem}
In decentralized TD learning, if all agents function normally, is there a correct distributed TD learning algorithm converges to a TD fixed point that satisfies: $w^{*}=\mathbb{E}_{s\sim d_{\pi}(\cdot),a\sim \pi(\cdot|s)}[\phi(s)\sum_{i=1}^{n} \frac{1}{n}r^{i}(s,a)]$?
\label{pro: byzn_free}
\end{problem}

Note that the above convergence requires that each normal agent $i$ satisfy $\lim_{t\to\infty}\mathbb{E}[\|w^{i}_t-w^{*}\|^{2}]=0$.
This implies two important details. 
First, it signifies that all agents' parameters achieve consensus, meaning that they will all have the same values. 
Second, in addition to reaching a consensus, the agreed-upon value is $w^*$, which is referred to as the TD-fixed point.
We further note that from the perspective of the actor-critic framework in decentralized MARL, consensus on certain global information like value function is essential in computing local policy gradients \cite{ZhaYanLiu_18,HaiLiuLu_22,CheZhoChe_21}.

{\bf 2) Byzantine faulty multi-agent policy evaluation problems:}
With the presence of Byzantine agents, since we consider model poisoning Byzantine attack, it is clearly impossible to guarantee Byzantine agents to converge to the aforementioned Byzantine-free TD-fixed point $w^{*}$, defined in Problem~\ref{pro: byzn_free}. 
A natural goal is to consider if there exist correct algorithms such that the parameters converge to the fixed point corresponding to normal agents, 
which is formally stated as follows.
\begin{problem}
When $f>0$, is there a correct TD learning algorithm that allows the agents to converge to
$$w^{*}_{\mathcal{N}}=\mathbb{E}_{s\sim d_{\pi}(\cdot),a\sim \pi(\cdot|s)}[\phi(s)\sum_{i\in\mathcal{N}} \frac{1}{|\mathcal{N}|}r^{i}(s,a)],$$
where $\mathcal{N}$ denotes the set of normal agents?
\label{pro: byzn_unif_ave}
\end{problem}

The fixed point $w^{*}_{\mathcal{N}}$ proposed corresponds to modelling the system rewards as the uniform average of all normal agents. However, as we will prove in Theorem~\ref{thm: byzn_unif_ave}, it is impossible to reach the TD-fixed point defined in Problem~\ref{pro: byzn_unif_ave}. 
Thus, we further relax the problem to consider a TD fixed point that is an appropriately weighted average of all normal agents.

\begin{problem}
When $f>0$, is there a correct TD learning algorithm that allows the agents to converge to $$w^{*}_{\alpha}=\mathbb{E}_{s\sim d_{\pi}(\cdot),a\sim \pi(\cdot|s)}[\phi(s)\sum_{i\in\mathcal{N}} \alpha_i r^{i}(s,a)],$$
where the weights $\alpha_i$ satisfies:
$\sum_{i\in\mathcal{N}} \alpha_i =1$, $\alpha_i \ge 0,  \forall i\in\mathcal{N}$?
\label{pro: byzn_wei_ave}
\end{problem}

The fixed point $w^{*}_{\alpha}$ proposed corresponds to modelling the system rewards as a non-uniform weighted average of all normal agents.
In Theorem~\ref{thm: byzn_wei_ave}, we will answer this question formally. 
In general, there is no correct algorithm that can guarantee the number of positive weights exceeds $|\mathcal{N}|-f$. 
In other words, in some multi-agent policy evaluation problems, achieving $|\mathcal{N}|-f$ number of positive weights is the best an algorithm can do. 
Toward this end, we introduce a $(\nu,\xi)$-admissible problem.
\begin{problem}
($(\nu,\xi)$-admissible problem) When $f>0$, for given pair of $\nu\in\mathbb{N}^{+}$ and $\xi>0$, is there a correct TD learning algorithm that allows the agents to converge to 
\begin{align}
w^{*}_{\nu,\xi}=\mathbb{E}_{s\sim d_{\pi}(\cdot),a\sim \pi(\cdot|s)}[\phi(s)\sum_{i\in\mathcal{N}} \alpha_i r^{i}(s,a)].
\label{eq: byzantine-alpha_TD}
\end{align}
where the weights $\alpha_i$ satisfies $\sum_{i\in\mathcal{N}} \alpha_i =1$, $\alpha_i \ge 0$, $\forall i\in\mathcal{N}$, $\sum_{i\in\mathcal{N}} \mathbf{1}(\alpha_i\ge \xi) \ge \nu$?
\label{pro: byzn_adm}
\end{problem}
Problem~\ref{pro: byzn_adm} is to learn the value functions with at least $\nu$ positive weights, which are bounded away from zero by at least $\xi$. 
It is easy to see that when $\xi=0$ and $\nu=1$, Problem~\ref{pro: byzn_adm} reduces to Problem~\ref{pro: byzn_wei_ave}.

{\bf 3) Main theoretical results:}
The following theorems state that, in the presence of Byzantine agents, no algorithm ensures that the normal agents' parameters converge to a fixed point in Problem~\ref{pro: byzn_unif_ave}. 
\begin{theorem}
When $f>0$, Problem \ref{pro: byzn_unif_ave} is not solvable.
\label{thm: byzn_unif_ave}
\end{theorem}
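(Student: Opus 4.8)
The plan is to use a classical indistinguishability argument from the Byzantine fault-tolerance literature. The core idea is to construct two Byzantine networked multi-agent MDPs that are \emph{indistinguishable} to some normal agents, yet for which the target fixed point $w^*_{\mathcal{N}}$ differs. If a correct algorithm existed, it would have to converge to two different values while processing identical inputs at those agents, which is a contradiction.

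First I would set up the two scenarios on the complete graph. In scenario $A$, I fix a set $\mathcal{F}_A$ of $f$ agents to be Byzantine and let the remaining $|\mathcal{N}|$ agents be normal with heterogeneous deterministic rewards $\{r^i\}$. In scenario $B$, I swap the roles: a different set $\mathcal{F}_B$ of $f$ agents is designated Byzantine. The crucial step is to have the Byzantine agents in each scenario \emph{mimic} the honest behavior of the corresponding agents in the other scenario when communicating value-function information. Concretely, in scenario $A$ the Byzantine agents in $\mathcal{F}_A$ transmit exactly the messages that honest agents with the reward profile from scenario $B$ would send, and symmetrically in scenario $B$. Because the graph is complete and the Byzantine agents can send arbitrary (and here, carefully crafted) values, this mimicry is realizable. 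From the viewpoint of an agent that is normal in \emph{both} scenarios, the stream of received messages is identical, so the algorithm must produce the same limiting parameter in $A$ and $B$.

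The key step is then to exhibit reward functions making $w^*_{\mathcal{N}}$ genuinely different across the two scenarios. Since $w^*_{\mathcal{N}}$ is the uniform average $\tfrac{1}{|\mathcal{N}|}\sum_{i\in\mathcal{N}} r^i$ projected onto the feature space, reassigning which agents count as normal changes which rewards enter the average. By Assumption~\ref{ass: fea} the feature matrix $\Phi$ is full-rank, so distinct averaged reward vectors yield distinct fixed points; I would pick the $r^i$ so that the uniform averages over $\mathcal{N}_A$ and over $\mathcal{N}_B$ differ (for instance, concentrating a large reward on one agent that is normal in $A$ but Byzantine in $B$). A correct algorithm must satisfy $\lim_t \mathbb{E}[\|w^i_t - w^*_{\mathcal{N}}\|^2]=0$ in each scenario; applied to an agent normal in both, this forces that agent's limit to equal both $w^*_{\mathcal{N},A}$ and $w^*_{\mathcal{N},B}$ simultaneously, contradicting $w^*_{\mathcal{N},A}\neq w^*_{\mathcal{N},B}$.

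The main obstacle I anticipate is the bookkeeping needed to guarantee that an agent remains normal across both scenarios and that the overlapping normal sets are nonempty while still allowing the averages to differ. With $n\ge 3f+1$ (Assumption~\ref{ass: Byzantine_upper_bound}) there is ample room: choosing $\mathcal{F}_A$ and $\mathcal{F}_B$ to overlap in most positions but differ in one index leaves many commonly-normal agents, and the single differing index suffices to perturb the uniform average. I would also need to verify carefully that the mimicry is \emph{feasible over the entire trajectory}, i.e., that the Byzantine messages in scenario $A$ can be made to equal the honest messages of scenario $B$ for all $t$; this is where the freedom of the model-poisoning attack, combined with the fact that normal agents follow fixed sampling policies and deterministic rewards, makes the two message-streams matchable. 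Once indistinguishability and the fixed-point separation are both established, the contradiction is immediate.
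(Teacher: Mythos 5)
Your proposal is correct and follows essentially the same route as the paper: the paper also runs an indistinguishability argument in which two executions share one fixed reward profile ($r^i\equiv i$), the Byzantine agent in each execution (agent $1$ in one, agent $n$ in the other) simply behaves honestly, so commonly-normal agents receive identical message streams, yet the two target fixed points differ by $\mathbb{E}_{s\sim d_{\pi}}[\phi(s)]\neq 0$, forcing a contradiction with $L^2$-convergence. The only differences are cosmetic: the paper uses a single swapped Byzantine agent and explicit rewards rather than your general $\mathcal{F}_A,\mathcal{F}_B$ with mimicry, which is the same mechanism stated more abstractly.
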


\begin{theorem}
For any $\xi>0$, Problem \ref{pro: byzn_adm} is not solvable for any $\nu>|\mathcal{N}|-f$.
\label{thm: byzn_wei_ave}
\end{theorem}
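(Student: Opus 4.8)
The plan is to derive the contradiction by an indistinguishability argument (one honest execution, several interpretations of who is Byzantine), specialized to a carefully chosen scalar instance. First I would reduce the target: guaranteeing $\nu'$ weights of size at least $\xi$ implies guaranteeing any $\nu\le\nu'$ of them, so it suffices to refute the smallest offending value $\nu=|\mathcal{N}|-f+1$; larger $\nu$ follow a fortiori. I would then work with scenarios in which exactly $f$ agents are Byzantine, so that $|\mathcal{N}|=n-f$ and the threshold reads $\nu=(n-f)-f+1=n-2f+1$. By Assumption~\ref{ass: Byzantine_upper_bound} ($n\ge 3f+1$) I can partition the agents into three groups $T^{+},T^{-},R$ with $|T^{+}|=|T^{-}|=f$ and $|R|=n-2f\ge f+1\ge 1$.

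Next I would fix a concrete instance satisfying Assumptions~\ref{ass: r_bou}--\ref{ass: fea}. I would use the scalar feature map $\phi(s)\equiv 1$ (full rank for $d=1$, with $\|\phi(s)\|\le 1$) and deterministic rewards $r^{i}\equiv M$ for $i\in T^{+}$, $r^{i}\equiv -M$ for $i\in T^{-}$, and $r^{i}\equiv 0$ for $i\in R$, with $M=r_{\max}>0$, so that $b_i:=\mathbb{E}_{s\sim d_\pi,a\sim\pi}[\phi(s)r^i(s,a)]$ equals $M,-M,0$ on the three groups. I would then consider the single all-honest execution $E$ in which every agent runs the candidate algorithm honestly with its assigned reward. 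Because a Byzantine agent may behave exactly like an honest one (Definition~\ref{def: byzantine_model}), $E$ is simultaneously a legal execution under any interpretation that declares an $f$-subset of $T:=T^{+}\cup T^{-}$ to be the Byzantine set. Since the honest agents run the same protocol on the same data, their mean-squared limit is a single deterministic scalar $w^{*}$, the same regardless of which interpretation we adopt.

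I would then play two interpretations against each other. Under $B=T^{-}$ (normal set $T^{+}\cup R$), correctness forces $w^{*}=M\sum_{i\in T^{+}}\alpha_i$ for nonnegative weights summing to at most one; the admissibility guarantee supplies $\nu=n-2f+1$ weights of size at least $\xi$ among the $|\mathcal{N}|=n-f$ normal agents, and since at most $|R|=n-2f$ of these can lie on $R$, at least one must lie on $T^{+}$, giving $\sum_{i\in T^{+}}\alpha_i\ge\xi$ and hence $w^{*}\ge \xi M>0$. The mirror interpretation $B=T^{+}$ (normal set $T^{-}\cup R$) yields $w^{*}\le -\xi M<0$ by the identical counting step. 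As $w^{*}$ is one and the same scalar for $E$, this is a contradiction, so no correct algorithm attains $\nu=n-2f+1$, and therefore none attains any $\nu>|\mathcal{N}|-f$.

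The main obstacle, and the step I would argue most carefully, is justifying that a single limit $w^{*}$ must meet the requirements of both interpretations at once. This rests on three facts: that $E$ is genuinely admissible under each candidate Byzantine set (so convergence is owed in each); that the mean-squared limit of the honest agents is deterministic and fixed by $E$ alone; and that the always-honest agents in $R$ cannot absorb all $\nu$ large weights because $|R|=n-2f<\nu$. I would also flag two bookkeeping points: the monotonicity reduction to $\nu=|\mathcal{N}|-f+1$, and the essential use of the scalar ($d=1$) construction---working with linearly independent $b_i$ in high dimension would pin $w^{*}$ down too rigidly and spuriously forbid even a single positive weight, whereas the sign-based scalar argument sidesteps this by invoking only the two extremal interpretations. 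The degenerate regime $\xi>1/(n-2f+1)$, where the weight constraint is infeasible outright, is immediate.
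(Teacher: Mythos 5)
Your proposal is correct, and it rests on the same underlying device as the paper's proof --- Byzantine agents that simply run the honest protocol, so that a single execution is legal under several interpretations of who is faulty, and the agents that are normal under every interpretation must converge (in mean square, hence to a unique limit) to a value satisfying each interpretation's guarantee --- but the instance and the way the contradiction is extracted are genuinely different. The paper sets $r^i=i$ on the first $f$ and last $q$ agents and $r^i=f+1$ on the middle block; its two interpretations confine the common limit to $[1,f+1]\,\mathbb{E}[\phi(s)]$ and $[f+1,n]\,\mathbb{E}[\phi(s)]$ respectively, pinning it to exactly $(f+1)\,\mathbb{E}[\phi(s)]$, after which the identity $\sum_{i=1}^{f}\alpha_i\, i=(f+1)\sum_{i=1}^{f}\alpha_i$ forces $\alpha_i=0$ on the low block. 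That argument never uses $\xi$: it caps even the number of \emph{strictly positive} weights at $|\mathcal{N}|-f$, and it is carried out for an arbitrary realized number $q\le f$ of Byzantine agents. Your symmetric construction ($+M$ on $T^{+}$, $-M$ on $T^{-}$, $0$ on $R$) instead exploits the admissibility threshold itself: pigeonhole ($\nu=n-2f+1>|R|=n-2f$) plants weight at least $\xi$ on $T^{+}$ under one interpretation and on $T^{-}$ under the other, giving $w^{*}\ge \xi M$ and $w^{*}\le -\xi M$ for one and the same scalar $w^{*}$ --- a sign contradiction that needs no exact pinning of the limit. Your explicit monotonicity reduction to $\nu=|\mathcal{N}|-f+1$ is also a clean bookkeeping step the paper leaves implicit. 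What each buys: the paper's route yields the sharper structural conclusion (weights are exactly zero, not merely small, outside a set of size $|\mathcal{N}|-f$, uniformly in $\xi$ and $q$); yours is more transparent and isolates precisely where $\xi$ and the counting enter.

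Two minor flags, neither fatal. First, you instantiate only executions with exactly $f$ Byzantine agents; this suffices to refute the stated guarantee, but the paper's proof establishes the $|\mathcal{N}|-f$ cap for every realized $q\le f$. Second, your closing remark about high-dimensional instances is inaccurate: with linearly independent $b_i$, indistinguishability would force the weights to vanish on $T^{+}$ (respectively $T^{-}$) while leaving the weights on $R$ unconstrained --- i.e., exactly the $|\mathcal{N}|-f$ cap again, not a spurious prohibition of all positive weights. Your main argument never relies on that remark, so its validity is unaffected.
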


Theorem~\ref{thm: byzn_wei_ave} says that a $(|\mathcal{N}|-f,\xi)$ admissible solution is the best one can achieve for some $\xi>0$. 

We remark that even though the proofs for above two theorems are inspired by \cite{SuVai_16}, there are two significant differences in the proofs and implications. First, our proof is convergence for stochastic terms whereas in \cite{SuVai_16}, the proof is for deterministic terms. Secondly, the impossibility results hold for general multi-agent policy evaluation problem, including tabular case and linear approximations, whereas in \cite{SuVai_16}, the impossibility result is just for scalar case. 

We also remark that the impossibility results holds for general graph, not just limited to complete graph, in decentralized multi-agent settings as well where the proof will be the same. The reason that we assumed a complete graph in the beginning is to design the algorithm in Section \ref{sec: BDTD}.
\section{Byzantine-tolerant Decentralized Temporal Difference Learning}
\label{sec: BDTD}
In this section, we provide a Byzantine-tolerant decentralized TD (BDTD) learning algorithm for normal agents to solve MARL policy evaluation in the sense of Theorem~\ref{thm: byzn_wei_ave}. In order to derive such an algorithm, we further assume in Assumption \ref{ass: fea}, the dimension $d=1$, i.e. the features are reduced to scalar features.

{\bf 1) Behavior of Byzantine agents:}
The behaviors of Byzantine agents are described in Algorithm~\ref{alg: Byzantine}. 
The parameters sent by the Byzantine agents can be arbitrary (denoted as $*$). 
We note that Byzantine agents can only poison the local models of their own, which are the information to be exchanged with their neighbors. 
This is referred to as local model poisoning \cite{FanCaoJia_20}. 
We do not consider the data poisoning models, where Byzantine agents may change the data which may include local policies and local actions (global state as a result). 

On the other hand, in a decentralized multi-agent setting, a Byzantine agent can send inconsistent parameters to its neighbors, which means that a Byzantine agent can send one parameter to one neighbor and a distinct parameter to another neighboring agent. 
There is a more restricted Byzantine model called Byzantine broadcast model \cite{SuVai_15a}, where a Byzantine agent sends the same parameter to neighboring agents. 
Here, in our work, we focus on the more general setting where Byzantine agents may send inconsistent parameters.

\begin{algorithm}[t]
  \SetKwInOut{Input}{Input}
  \SetKwInOut{Output}{Output}
  \Input{initial state $s_0$, given policy $\pi$, state features $\phi$, step-size $\eta_k$, initial parameters $\{w^{i}_0\}_{i\in\mathcal{V}}$.}
  \BlankLine
    \For{$k=0,1,\cdots$}{
  	 \For{ all $i\in\mathcal{F}$}{
		  Execute action $a^{i}_{k}\sim\pi^{i}(\cdot|s_{k})$; \\
		  Observe the state $s_{k+1}$ and reward $r^{i}_{k+1}$;
		  }
          Send $*$ to neighbors\footnotemark and receive values from neighbors;
  }
  \caption{\textbf{Byzantine Agent's Behavior}.}
  \label{alg: Byzantine}
\end{algorithm}
\footnotetext[1]{The arbitrary value * can be different to neighbors.}

{\bf 2) $f$-Trimmed mean subroutine:}
We will define $f$-trimmed mean, which is a subroutine we used for parameters.
\begin{definition}[$f$-Trimmed Mean~\cite{YinCheKan_18}]
For any multi-set\footnote{A multi-set allows the elements in it to be the same.} $\{x^{1},\cdots,x^{n}\}$, where $x^{i}\in\mathbb{R}$ for all $i$, sort the $n$ values in ascending order (break the tie uniformly random), then remove the largest $f$ and smallest $f$, respectively. For the remaining $n-2f$ values, return the average value.
\end{definition}

{\bf 3) Policy evaluation for normal agents:}
Algorithm \ref{alg: critic} describes the decentralized multi-agent policy evaluation algorithm for normal agents. For any given policy $\pi$, the algorithm learns the value function parameters using decentralized TD learning.\footnote{For simplicity, we used TD(0) instead of TD($\lambda$).
The extension to TD$(\lambda)$ where $\lambda\in(0,1]$ is straightforward.} We note that for normal agents $i\in\mathcal{N}$, it is only required to know its own local policy $\pi^{i}$.

In Line-2, if a value is not received from neighbors, set the value to be some default value; If a value is beyond the projection ball, $n\leftarrow n-1$, $f\leftarrow f-1$ and remove the corresponding agent.
In Line~9, we have used projected TD learning, a variant of TD learning introduced in \cite{BhaRusSin_18}. 
A choice for such radius $R$ in our scalar case is $R=\frac{2r_{\max}}{\phi_{\min}(1-\gamma)^{3/2}}$, where $\phi_{\min}:=\min_{s\in\mathcal{S}}|\phi(s)|$ (see \cite[Lemma~7]{BhaRusSin_18} for vector case). 
This projection step is mainly for theoretical analysis for bounding TD error terms.
In practice, such a projection step may be dropped.
The step sizes $\eta_t$ used in Line~8 of Algorithm~\ref{alg: critic} are diminishing. 
The step sizes are known to all agents as priori and satisfy the standard conditions: $\sum_{t=1}^{\infty} \eta_t=\infty$ and $\sum_{t=1}^{\infty} \eta^{2}_t<\infty$. 
A typical step size choice is $\eta_t=\frac{1}{t}$ for $t\ge 0$.

\begin{algorithm}[t!]
  \SetKwInOut{Input}{Input}
  \SetKwInOut{Output}{Output}
  \Input{initial state $s_0$, given policy $\pi$, state features $\phi$, step-size $\eta_k$}
  \BlankLine

  \For{$k=0,1,\cdots$}{
    Send parameter $w^{i}_k$ to neighbors and receive values from neighbors;\\ 
    Consensus Update: $\tilde{w}^{i}_{k}\leftarrow \text{$f$-Trimmed Mean}$;\\
  	\For{all $i\in\mathcal{N}$}{ 
        Execute action $a^{i}_{k}\sim\pi^{i}(\cdot|s_{k})$; \\
		Observe the state $s_{k+1}$ and reward $r^{i}_{k+1}$; \\
		Update $\delta^{i}_{k}\leftarrow r^{i}_{k+1}+\gamma\phi^{T}(s_{k+1})w^{i}_k-\phi^{T}(s_{k})w^{i}_k$;
		}
		\bf{Projected TD Step:} $w^{i}_{k+1}\leftarrow \Pi_{2,R}(\tilde{w}^{i}_k+\eta_k\delta^{i}_{k}\cdot\phi(s_{k}))$; 
  }
  \caption{\textbf{Byzantine-Tolerant Decentralized TD (BDTD) Learning for Normal Agents}.}
  \label{alg: critic}
\end{algorithm}

{\bf 4) Main theoretical results for Algorithm~\ref{alg: critic}}
Let $\bar{w}_t=\frac{1}{|\mathcal{N}|}\sum_{i\in\mathcal{N}}w^{i}_t$, i.e. the average of the parameters of normal agents at iteration $t\ge 0$. Then, we have the following consensus result that states parameters of all normal agents will converge to the average asymptotically.
\begin{theorem}
The sequences generated in normal agents by Algorithm \ref{alg: critic} will achieve consensus, i.e. for any $i\in\mathcal{N}$, we have $
\lim_{t\to\infty} |w^{i}_t-\bar{w}_t|=0.$
\label{thm: consensus}
\end{theorem}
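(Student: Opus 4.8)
The plan is to decompose one iteration of Algorithm~\ref{alg: critic} for a normal agent into its two constituent operations---the $f$-trimmed-mean consensus step $w^i_k \mapsto \tilde w^i_k$ and the projected TD step $\tilde w^i_k \mapsto w^i_{k+1}$---and to track the \emph{diameter} of the normal agents' parameters,
\[
V_k := \max_{i\in\mathcal{N}} w^i_k - \min_{i\in\mathcal{N}} w^i_k .
\]
Since the running average satisfies $\bar w_k \in [\min_{i\in\mathcal{N}} w^i_k,\ \max_{i\in\mathcal{N}} w^i_k]$, one has $|w^i_k - \bar w_k| \le V_k$ for every $i\in\mathcal{N}$, so it suffices to prove $V_k \to 0$. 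I would establish this through a one-step recursion $V_{k+1} \le \rho V_k + 2C\eta_k$ with a contraction factor $\rho\in(0,1)$ coming from the trimming subroutine and a vanishing additive term coming from the bounded TD correction.

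The first and central step is a \textbf{consensus contraction lemma} for the $f$-trimmed mean on the complete graph. Because the graph is complete, every normal agent receives the identical multiset of normal values $\{w^j_k : j\in\mathcal{N}\}$ and differs only in the at most $f$ Byzantine values it is fed, which after the out-of-ball filtering in Line~2 all lie in $[-R,R]$. I would first record the order-statistic fact that, since at most $f$ received values exceed $\max_{j\in\mathcal{N}} w^j_k$ and at most $f$ fall below $\min_{j\in\mathcal{N}} w^j_k$, every surviving value---and hence $\tilde w^i_k$---lies in the normal range $[\min_{j\in\mathcal{N}} w^j_k,\ \max_{j\in\mathcal{N}} w^j_k]$, which already gives non-expansion of $V_k$. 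The quantitative improvement is that, among the $n-2f$ surviving values entering any agent's trimmed mean, at least $|\mathcal{N}|-2f$ are the \emph{common} central normal order statistics shared by all agents, while at most $|\mathcal{F}|\le f$ ``slots'' can be occupied by Byzantine-influenced values, each confined to an interval of width $V_k$. Because the common part cancels in any difference $\tilde w^p_k-\tilde w^q_k$, this yields
\[
\max_{i\in\mathcal{N}} \tilde w^i_k - \min_{i\in\mathcal{N}} \tilde w^i_k \ \le\ \frac{f}{\,n-2f\,}\, V_k \ =:\ \rho\, V_k,
\]
and Assumption~\ref{ass: Byzantine_upper_bound} ($n\ge 3f+1$) forces $\rho \le \tfrac{f}{f+1} < 1$. \emph{This contraction lemma is where I expect the real work to be}: one must argue carefully which surviving values are common across agents even though the Byzantine inputs shift the trimming thresholds, and must check that ties, broken at random, never change the value of the resulting average. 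The reductions of $n$ and $f$ triggered in Line~2 occur at most $f$ times and preserve $n\ge 3f+1$, so a single uniform $\rho<1$ can be taken over the finitely many resulting configurations.

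Next I would bound the TD perturbation. In the scalar ($d=1$) regime the projection keeps $|w^i_k|\le R$ and the trimmed mean keeps $|\tilde w^i_k|\le R$; together with Assumptions~\ref{ass: r_bou} and~\ref{ass: fea} ($r^i_{k+1}\le r_{\max}$, $|\phi(s)|\le 1$) this gives a uniform bound $|\delta^i_k\,\phi(s_k)| \le r_{\max} + (1+\gamma)R =: C$. Since $\tilde w^i_k$ already lies in the radius-$R$ ball and $\Pi_{2,R}$ is non-expansive, writing $w^i_{k+1}=\tilde w^i_k+\varepsilon^i_k$ we obtain $|\varepsilon^i_k| = |\Pi_{2,R}(\tilde w^i_k + \eta_k\delta^i_k\phi(s_k)) - \Pi_{2,R}(\tilde w^i_k)| \le \eta_k C$. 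Taking the max and min over $i\in\mathcal{N}$ and combining with the contraction lemma gives $V_{k+1} \le \rho\,V_k + 2C\eta_k$.

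Finally I would solve this recursion. Unrolling gives $V_k \le \rho^k V_0 + 2C\sum_{j=0}^{k-1}\rho^{\,k-1-j}\eta_j$, whose first term vanishes geometrically. For the convolution term, the condition $\sum_t \eta_t^2<\infty$ implies $\eta_t\to 0$, so splitting the sum at an index beyond which $\eta_j<\epsilon$ bounds the tail by $\epsilon/(1-\rho)$ and sends the finite head to zero as $k\to\infty$; letting $\epsilon\downarrow 0$ yields $V_k\to 0$. Hence $|w^i_k-\bar w_k|\le V_k\to 0$ for every $i\in\mathcal{N}$, which is the claimed asymptotic consensus. I do not expect the perturbation bound or the recursion to pose difficulties; the only delicate point is the trimmed-mean contraction lemma described above.
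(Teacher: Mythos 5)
Your proposal is correct, but it reaches consensus by a genuinely different route than the paper. The paper follows the general Byzantine-consensus machinery of \cite{Vai_12}: it writes the trimmed-mean update of the normal agents as multiplication by a realization-dependent row-stochastic matrix $A_k$, invokes Lemma~\ref{lem: A_low_bou} to lower-bound $A_k$ entrywise by $\beta$ times the connectivity matrix of a reduced graph, and shows that products of $\tau|\mathcal{N}|$ consecutive such matrices have ergodicity coefficient $\lambda(Q(t))\le 1-\beta^{\tau|\mathcal{N}|}$ (Lemma~\ref{lem: lambda_Q}), so that by Hajnal's inequality (Lemma~\ref{lem: del_lam}) the matrix products converge geometrically to rank one; the TD increments are then unrolled through these products in an $\ell_2$ consensus-error bound and killed by the vanishing step sizes, with a separate lemma (Lemma~\ref{lem: proj}) needed to ensure the projection cannot increase consensus error. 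You instead exploit the complete graph directly: every normal agent feeds the \emph{same} multiset of normal values into its trimmed mean, the central normal order statistics (ranks $f+1,\dots,|\mathcal{N}|-f$) survive every agent's trimming and cancel in any pairwise difference, and the at most $|\mathcal{F}|\le f$ remaining slots per agent are confined to $[\min_{j\in\mathcal{N}}w^j_k,\max_{j\in\mathcal{N}}w^j_k]$ by the order-statistic argument, giving the explicit one-step diameter contraction $\rho=f/(n-2f)\le f/(f+1)<1$; the projection is then handled for free by non-expansiveness, since $\tilde w^i_k$ already lies in the ball, so its displacement is absorbed into the $O(\eta_k)$ perturbation rather than requiring an analogue of Lemma~\ref{lem: proj}. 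Your route buys a per-iteration, quantitatively sharper contraction and avoids the reduced-graph/ergodicity-coefficient apparatus entirely; what it gives up is generality, since the ``common multiset'' cancellation is precisely what fails on an incomplete communication graph, which is the setting the paper's matrix machinery is designed to survive. Both proofs conclude identically: geometric contraction plus bounded perturbations of size $O(\eta_k)$ with $\eta_k\to 0$ force the consensus error to zero, and your observation that $|w^i_k-\bar w_k|\le V_k$ converts the diameter statement into the theorem's claim.
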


Theorem~\ref{thm: consensus} ensures that even starting with different initial parameters, in the heterogeneous reward setting and, more importantly, inconsistent Byzantine faulty model, the parameters among normal agents will eventually reach an agreement. 
However, the average parameter $\bar{w}_t$ itself may not have a limit depending on the heterogeneity of the problem.

\section{Evaluation}

\subsection{Experimental Setup}
\label{sec:setup}

\begin{figure*}[!t]
\centering
\subfloat[Gaussian attack]{\includegraphics[width=0.28 \textwidth]{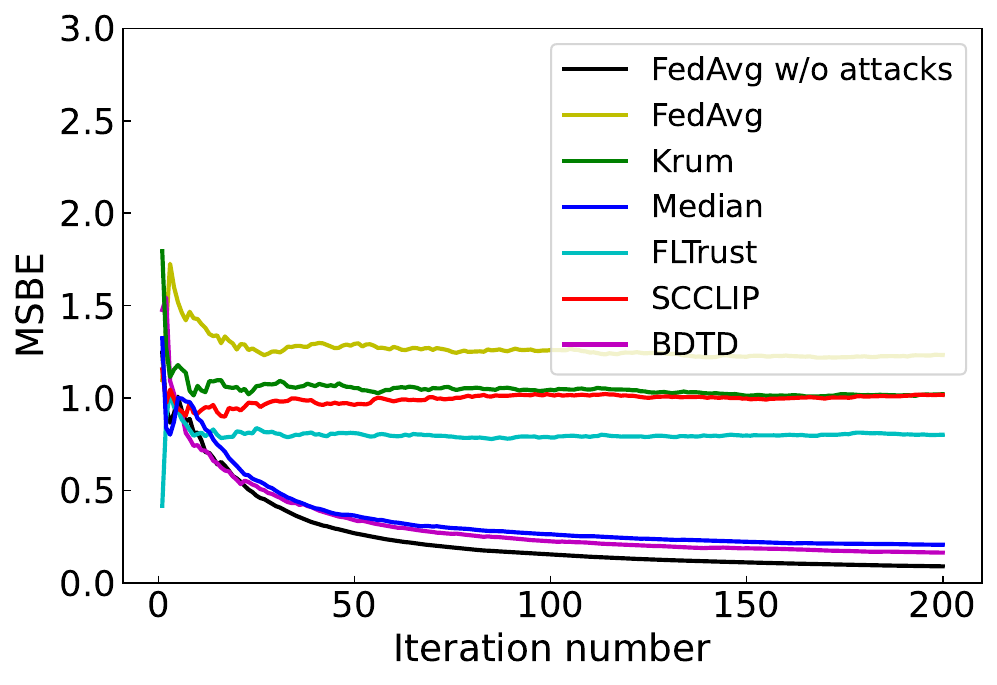}}
 \subfloat[Krum attack]{\includegraphics[width=0.28 \textwidth]{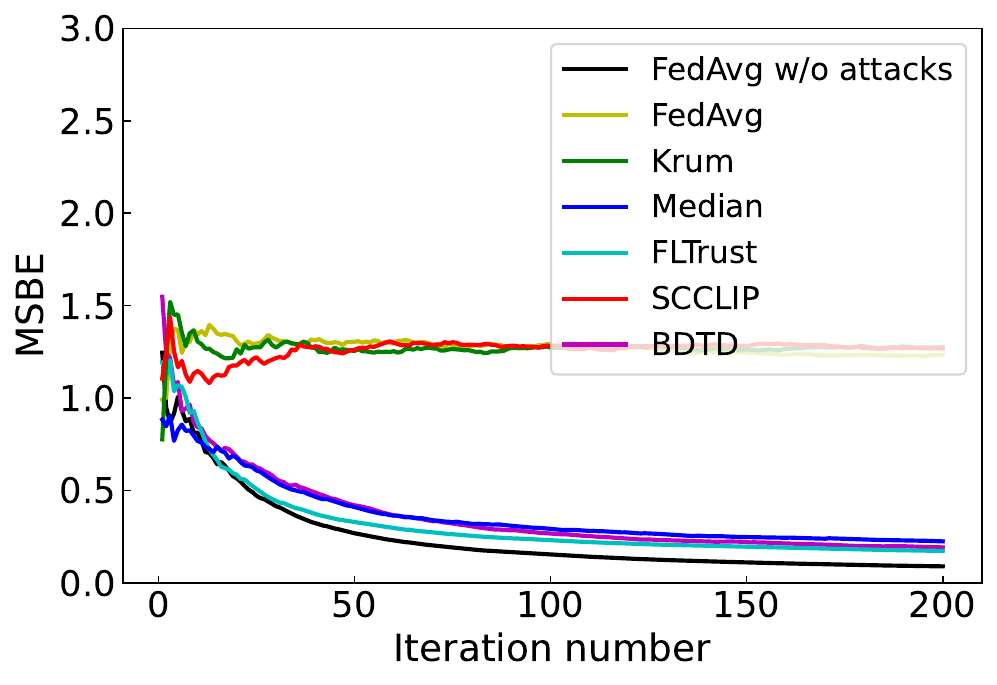}}
  \subfloat[Trim attack]{\includegraphics[width=0.28 \textwidth]{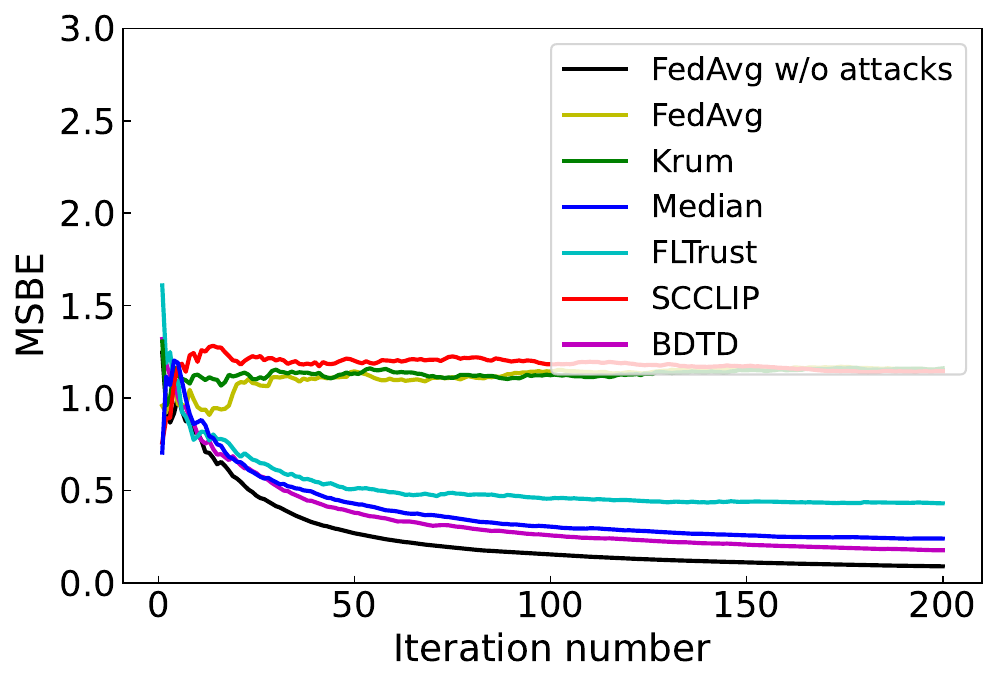}}
    \caption{Mean squared Bellman error (MSBE) of different methods under different attacks.}
  \label{result_msbe}
\vspace{-0.5mm}
\end{figure*}

\begin{figure*}[!t]
\centering
\subfloat[Gaussian attack]{\includegraphics[width=0.28 \textwidth]{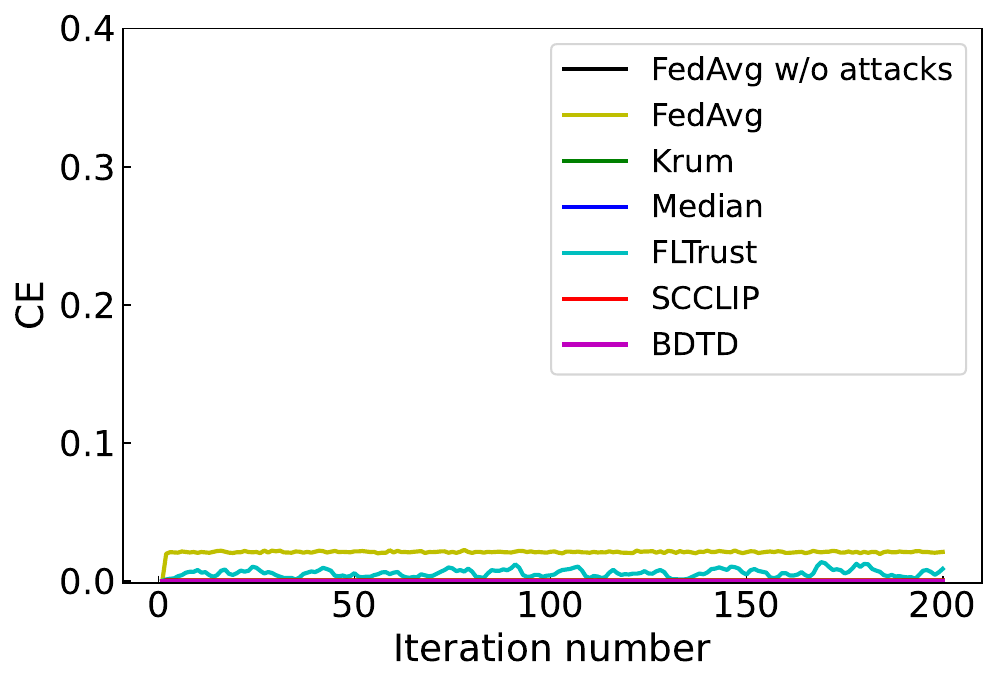}}
 \subfloat[Krum attack]{\includegraphics[width=0.28 \textwidth]{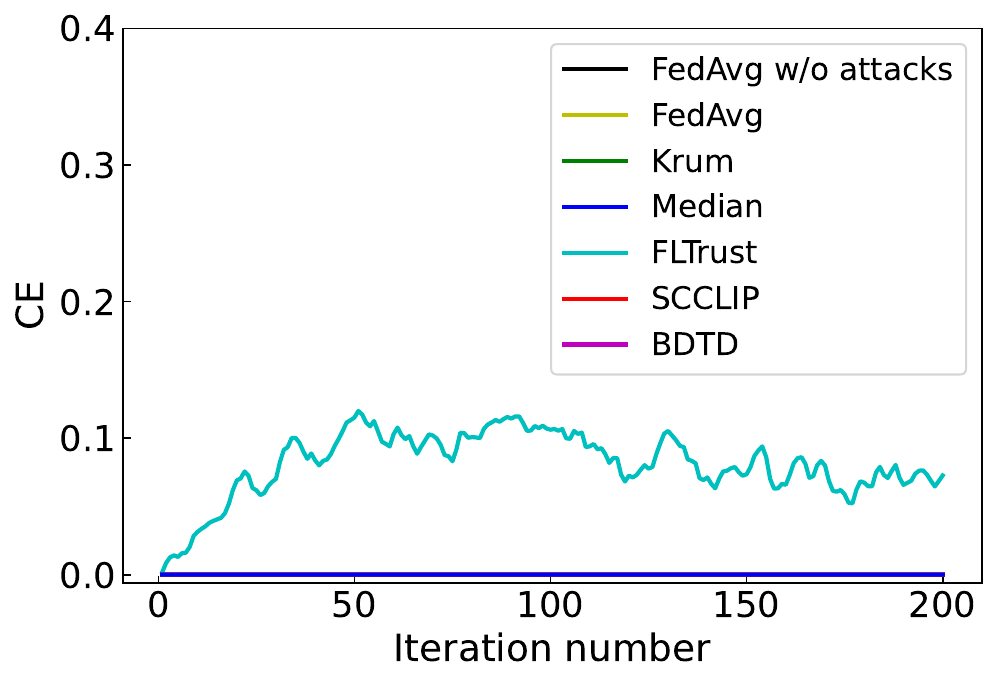}}
  \subfloat[Trim attack]{\includegraphics[width=0.28 \textwidth]{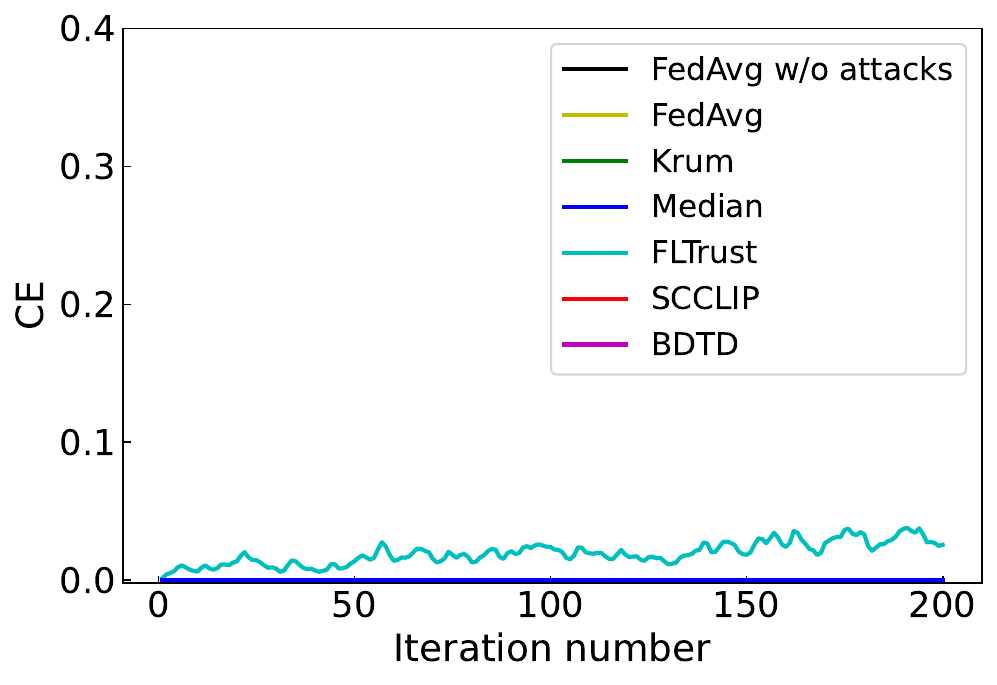}}
    \caption{Consensus error (CE) of different methods under different attacks.}
  \label{result_mce}
\vspace{-0.9mm}
\end{figure*}

{\bf 1) Parameter Settings:}
We consider a cooperative navigation task known as Simple Spread, derived from the Multi-Particle Environment (MPE)~\cite{lowe2017multi}. 
The task involves 10 agents aiming to collectively cover all landmarks. There are two malicious agents among them.
The agents receive rewards based on the proximity between the closest agent and each landmark. 
Collisions between agents result in negative rewards.
Each agent selects actions from the action space $\mathcal{A}=$\{no action, move left, move right, move down, move up\} using a uniformly random policy.
The objective is to train all agents to identify and cover their respective landmarks while avoiding collisions. The malicious agents, on the other hand, attempt to deceive the other agents by providing arbitrary information.
The feature dimension is 40, encompassing the agents' self-positions, relative positions of landmarks, and relative positions of other agents.
The step-size is set to 0.1.
We run our experiments on Intel(R) Core(TM) i9-12900K CPU.
We repeat each experiment 10 times, and report the average results.
Since the variances of results are small, we omit them here.

{\bf 2) Compared Methods:} 
We compare our \alg algorithm with the following aggregation baselines. 

\begin{list}{\labelitemi}{\leftmargin=1em \itemindent=-0.0em \itemsep=.1em}

\item {\em FedAvg~\cite{mcmahan2017communication}:} Every agent, upon receiving parameters from its neighboring agents, calculates the weighted mean of the received parameters.

\item {\em Krum~\cite{blanchard2017machine}:}
In the Krum aggregation rule, each agent produces a single parameter that minimizes the sum of distances to its subset of neighbors, and the size of the subset is $n-f$, where $n$ is the total number of agents and $f$ is the maximum number of Byzantine agents.

\item {\em Coordinate-wise median (Median) ~\cite{YinCheKan_18}:} In every dimension, each agent calculates the coordinate-wise median of all the parameters it receives.

\item {\em FLTrust~\cite{cao2020fltrust}:}
When an agent receives a parameter from its neighboring agent, it first calculates the cosine similarity between its own parameter and the received parameter. If the cosine similarity is positive, the agent then normalizes the received parameter to have the same magnitude as its own parameter. After that, the agent computes the weighted average of all the normalized parameters sent by its neighbors.

\item {\em SCCLIP~\cite{he2022byzantine}:}
The SCCLIP method mitigates the influence of Byzantine agents through the use of the clip operation. In this approach, when an agent receives parameters from its neighboring agents, it employs its own parameter as the reference point to limit or clip the received parameters.
\end{list}

{\bf 3) Poisoning Attacks:} 
We consider the following poisoning attack schemes in our experiments.
\begin{list}{\labelitemi}{\leftmargin=1em \itemindent=-0.0em \itemsep=.1em}
\item {\em Gaussian attack~\cite{blanchard2017machine}:}
In a Gaussian attack, each Byzantine agent samples a vector from a Gaussian distribution with a mean of zero and a standard deviation of one, then sends it to its neighboring agent.
\item {\em Krum attack~\cite{FanCaoJia_20}:}
In the Krum attack, Byzantine agents manipulate their parameters to degrade the Krum method's performance.
\item {\em Trim attack~\cite{FanCaoJia_20}:}
In the Trim attack, the attacker carefully manipulates the parameters of Byzantine agents in a way that causes a significant deviation between the aggregated parameter before and after the attack.
\end{list}

{\bf 4) Evaluation Metrics:}
We consider the following two evaluation metrics: i) mean squared Bellman error (MSBE) and ii) consensus error (CE).
Given parameters $\{w^{i}_k\}_{i\in\mathcal{N}}$ and samples $(s_k,s_{k+1})$, the empirical squared Bellman error (SBE) of the $\kappa$-th sample is defined as
$\text{SBE} (\left\{w_k^i\right\}_{i\in\mathcal{N}}, s_{\kappa},s_{\kappa + 1} )
:=\frac{1}{|\mathcal{N}|} \sum_{i \in \mathcal{N}}\left(\bar{r}_{\kappa}+\gamma\phi(s_{\kappa+1})^{T}w^{i}_{\kappa}-\phi(s_{\kappa})^{T}w^{i}_{\kappa}\right)^2$, where $\bar{r}_{\kappa}=\frac{1}{\mathcal{N}}\sum_{i\in\mathcal{N}} r^{i}_\kappa$.
Then, MSBE up to the $k$-th sample is defined as the average of SBEs over the history, which is computed as
$\text{MSBE}:=\frac{1}{k} \sum_{\kappa=1}^k \text{SBE} (\left\{w^i_\kappa\right\}_{i\in\mathcal{N}}, s_\kappa,s_{\kappa+1} )$.
The consensus error is computed as $\texttt{CE} = \frac{1}{|\mathcal{N}|}\sum_{i\in\mathcal{N}}\|w^{i}_k-\bar{w}_k\|^2$.
The smaller the MSBE and CE, the better the defense.

\vspace{-0.9mm}
\subsection{Experimental Results}
Figures~\ref{result_msbe} and~\ref{result_mce} show the MSBE and CE of different methods under different attacks.
``FedAvg w/o attacks'' means that there are no Byzantine agents in the system.
We observe from Figures~\ref{result_msbe} and~\ref{result_mce} that our proposed \alg overall achieves the best performance across various attack scenarios.
Even under the strong Trim attack, our proposed \alg's MSBE is comparable to that of FedAvg without any attacks. 
In contrast, existing Byzantine-robust aggregation rules, e.g., Krum and SCCLIP, are susceptible to poisoning attacks.
For instance, FLTrust is vulnerable to both the Gaussian and Krum attacks.
Under the Gaussian attack, the final MSBE of FLTrust is 0.801.
Similarly, under the Krum attack, although MSBE of FLTrust is low, CE is large, 
indicating a lack of consensus among the normal agents when using the FLTrust aggregation rule.
The Krum aggregation rule is susceptible to all three considered attacks.
Specifically, under three poisoning attacks, the CE of Krum remains small, but the MSBE becomes large.
This suggests that when normal agents employ the Krum aggregation rule, they tend to reach a poor consensus.

\section{Proofs for Theorems in Section \ref{sec: gen_res_byz}}
Let $[n]:=\{1,\cdots,n\}$, i.e. the set of all agents.

\subsection{Proof of Theorem \ref{thm: byzn_unif_ave}}
\begin{proof}
Assume that $f>0$. Inspired by \cite{SuVai_16}, the theorem is proved by contradiction.

Suppose that there exists a correct algorithm $\mathcal{A}$ that solves Problem \ref{pro: byzn_unif_ave}. Define the rewards of the $n$ agents as follows for all state-action pair $(s,a)\in \mathcal{S}\times \mathcal{A}$ to be
$r^{i}(s,a)=i$ for all $i\in [N]$.

Consider the following two executions that in the first one, agent $1$ is the Byzantine agent and the rest are normal agents whereas in the second one, agent $n$ is the Byzantine agent and the rest are normal agents. In both executions, Byzantine agent behaves correctly as the correct algorithm, this is reasonable as Byzantine agents can behave arbitrarily. As a result, for execution 1, algorithm $A$ outputs the result $w^{i,1}_t$ for each agent $i$ and given $t$ such that, we have 
\begin{align}
w^{i,1}_t&\xrightarrow{L^2}\frac{1}{n-1}\sum_{i\in\{2,\cdots,n\}} \mathbb{E}_{s\sim d_{\pi}(\cdot),a\sim \pi(\cdot|s)}[i\phi(s)] \nonumber \\
&=\frac{\sum_{i\in\{2,\cdots,n\}}i}{n-1} \mathbb{E}_{s\sim d_{\pi}(\cdot)}[\phi(s)] \nonumber \\
&=\frac{n(n+1)-2}{2(n-1)}\mathbb{E}_{s\sim d_{\pi}(\cdot)}[\phi(s)]\triangleq w^{*,1} \label{eq: execution_1}
\end{align}
where $L^{2}$ denote expected mean-square convergence. More specifically,
$$\lim_{t\to\infty}\mathbb{E}\|w^{i,1}_t-w^{*,1}\|^{2}=0.$$
Similarly, for execution 2, we have 
\begin{align}
w^{i,2}_t&\xrightarrow{L^2}\frac{1}{n-1}\sum_{i\in\{1,\cdots,n-1\}} \mathbb{E}_{s\sim d_{\pi}(\cdot),a\sim \pi(\cdot|s)}[i\phi(s)] \nonumber \\
&=\frac{\sum_{i\in\{1,\cdots,n-1\}}i}{n-1} \mathbb{E}_{s\sim d_{\pi}(\cdot)}[\phi(s)] \nonumber \\
&=\frac{n}{2}\mathbb{E}_{s\sim d_{\pi}(\cdot)}[\phi(s)]\triangleq w^{*,2}. \label{eq: execution_2}
\end{align}
Note that $$w^{*,1}-w^{*,2}=\mathbb{E}_{s\sim d_{\pi}(\cdot)}[\phi(s)].$$
By the assumption of linear independence of feature vectors $\phi(\cdot)$, by Assumption \ref{ass: fea}, and the fact that $d_{\pi}(\cdot)$ is a distribution, we know that there exists an entry in vector $\mathbb{E}_{s\sim d_{\pi}(\cdot)}[\phi(s)]$ is on-zero. As a result, $w^{*,1}\neq w^{*,2}$.

However, for any agent $i\in\{2,\cdots,n-1\}$ perspective, they can't distinguish the above 2 executions, as a result, they must output the same results for both executions. However, this contradicts with the assumption that both executions would converge to distinct fixed points shown in \eqref{eq: execution_1} and \eqref{eq: execution_2} respectively. Therefore, there's no correct algorithm exists for \emph{\textbf{Problem}} \ref{pro: byzn_unif_ave} and the proof is complete. 
\end{proof}

\subsection{Proof of Theorem \ref{thm: byzn_wei_ave}}
\begin{proof}
Recall that we assume $n>3f+1$ and denote the actual number of Byzantine agents in the system as $q$, i.e. $q=|\mathcal{F}|$. Let the rewards for any state-action pair $(s,a)\in\mathcal{S}\times\mathcal{A}$ and agent $i\in[n]$ to be 
\begin{align}
r^{i}(s,a)&=i, \quad \text{for } 1\le i\le f \text{ and } n-q+1\le i\le n \nonumber \\
r^{i}(s,a)&=f+1, \quad \text{for } f+1\le i\le n-q. \nonumber 
\end{align}
For any correct algorithm, consider the following two cases, where in both cases, Byzantine agents follow the correct algorithm.
\begin{itemize}
\item Case 1: In this case, agents $n-q+1\le i\le n$ are Byzantine agents. The output of the correct algorithm converges to
\begin{align}
w^{*}_{\alpha}\in [1,f+1]\mathbb{E}_{s\sim d_{\pi}(\cdot),a\sim \pi(\cdot|s)}[\phi(s)]. \nonumber 
\end{align}
\item Case 2:  In this case, agents $1\le i\le f$ are Byzantine agents. The output of the correct algorithm converges to
\begin{align}
w^{*}_{\alpha}\in [f+1,n]\mathbb{E}_{s\sim d_{\pi}(\cdot),a\sim \pi(\cdot|s)}[\phi(s)]. \nonumber 
\end{align}
\end{itemize}
As for any normal agent $i\in \{f+1,\cdots,n-q\}$ can't distinguish the above two cases, they must converge to an identical value in both cases. So, $w^{*}_{\alpha}$ must be $ (f+1)\mathbb{E}_{s\sim d_{\pi}(\cdot),a\sim (\cdot|s)}[\phi(s)]$. In other words, 
\begin{align}
w^{*}_{\alpha}&=(f+1)\mathbb{E}_{s\sim d_{\pi}(\cdot),a\sim (\cdot|s)}[\phi(s)] \nonumber \\
&=\sum_{i=1}^{n-q}\alpha_i r^{i} \mathbb{E}_{s\sim d_{\pi}(\cdot),a\sim \pi(\cdot|s)}[\phi(s)], \nonumber 
\end{align}
where the second equality is due to the definition of Case 1, which is equivalent to
\begin{align}
w^{*}_{\alpha}=\sum_{i=1}^{n-q}\alpha_i r^{i}=f+1. \label{eq: case_1}
\end{align}
The above equivalency is again because feature vectors are linearly independent.
By the reward setting given above, we further have, for \eqref{eq: case_1}, 
\begin{align}
\sum_{i=1}^{f}\alpha_i i+(f+1)\sum_{i=f+1}^{n-q}\alpha_i=f+1, \nonumber
\end{align}
which is equivalent to 
\begin{align}
\sum_{i=1}^{f}\alpha_i i=(f+1)(1-\sum_{i=f+1}^{n-q}\alpha_i)=(f+1)\sum_{i=1}^{f}\alpha_i 
\end{align}
which is only possible when $\alpha_i=0$ for all $1\le i\le f$. As a result, there could be at most $|\mathcal{N}|-f$ can be positive in Case 1 regardless of $\xi$. And $\nu$ can't be larger than  $|\mathcal{N}|-f$ and the proof is complete.
\end{proof}

\section{Conclusion}

In this paper, we studied fully decentralized multi-agent policy evaluation problem in the presence of Byzantine agents.
We first established the impossibility of designing a correct algorithm that obtains value functions where the system-wide rewards are represented as the uniform average rewards of all normal agents.
We then proceeded to relax the problem by considering the situation where the system-wide rewards are represented as appropriately weighted average rewards of the normal agents.
Subsequently, we demonstrated that there is no correct algorithm capable of ensuring that the number of positive weights surpasses $|\mathcal{N}|-f$ for the aforementioned relaxed problem.
Lastly, we proposed a decentralized multi-agent policy evaluation algorithm, which guarantees consensus among all normal agents.

\section*{Acknowledgments}
This work has been supported in part by NSF grants CAREER CNS-2110259, CNS-2112471, IIS-2324052, ECCS-2331104, a DARPA YFA Award D24AP00265, and an ONR grant N00014-24-1-2729.

\bibliographystyle{plain}
\bibliography{refs}

\appendix

\section{Proofs for Theorems in Section \ref{sec: gen_res_byz}}
\subsection{Consensus Derivation for Theorem \ref{thm: consensus}}
We note that the consensus update requires a new construction of a consensus matrix that is solely based on normal agents. That is because the dynamics of Byzantine agents are hard to predict and analyze even if it's possible. As a result, using Byzantine agents' parameters to characterize a normal agent's dynamic is not an approach to go. Fortunately, for the class of trim-mean based algorithms, it is possible to characterize a normal agent's behavior solely based on the normal neighbors of the agent's.

\begin{definition}[Reduced graph \cite{Vai_12}]
For a given graph $\mathcal{G}(\mathcal{V},\mathcal{E})$ and $\mathcal{F}\subset \mathcal{V}$, a graph $\mathcal{G}(\mathcal{N},\mathcal{E}_\mathcal{N})$ is said to be a reduced graph, if: (i) $\mathcal{N}=\mathcal{V}-\mathcal{F}$, and (ii) $\mathcal{E}_\mathcal{N}$ is obtained by first removing from $\mathcal{E}$ all the links incident on the nodes in $\mathcal{F}$, and then removing up to $f$ other incoming links at each node in $i\in \mathcal{N}$. 
\label{def: red_gra}
\end{definition}

Note that for complete graph that we consider in this work and a given $\mathcal{F}$, multiple reduced graph $\mathcal{G}$ may exist. Let $\mathcal{R}_F$ be the set of all possible reduced graphs $\mathcal{G}(\mathcal{N},\mathcal{E}_{\mathcal{N}})$. And further let $\tau=|\mathcal{R}_{F}|$, which is a finite number.

Since the underlying communication network is a complete graph and $n\ge 3f+1$, by \cite{Vai_12,SuVai_15c}, we know that the parameter update can be expressed with a transition matrix. 
In the following discussion, for any $\mathcal{H}\in R_{\mathcal{F}}$, we use $H$ to denote the connectivity matrix. That is, if there's a directed link from node $i$ to $j$, then $H(i,j)=1$ and otherwise $H(i,j)=0$. For diagonal elements, we set $H(i,i)=1$ for all $i\in\mathcal{N}$.
Moreover there exists a constant $\beta>0$ such that the following lemma holds.
\begin{lemma}[From \cite{Vai_12}]
 For any $t\ge 0$, there exists a graph $\mathcal{H}[t]\in R_{\mathcal{F}}$ such that $\beta H[t]\le A_t$.   
 \label{lem: A_low_bou}
\end{lemma}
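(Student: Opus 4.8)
The plan is to realize the one-step consensus (trimmed-mean) update of the normal agents as multiplication by a row-stochastic matrix $A_t$ that acts \emph{only} on the normal agents' current parameters, and then to lower-bound its entries by the connectivity matrix of a suitable reduced graph. First I would fix a normal agent $i$ and sort the multi-set of values it holds (its own parameter together with the values received over the complete graph) in ascending order, breaking ties by the stated rule. The $f$-trimmed mean deletes the $f$ smallest and $f$ largest entries and averages the remaining $n-2f$. The pivotal observation, which uses Assumption~\ref{ass: Byzantine_upper_bound} ($n\ge 3f+1$) and $|\mathcal{F}|\le f$, is a sandwiching fact: among the $f+1$ smallest sorted values at least one belongs to a normal agent, so the smallest surviving value is at least $\min_{j\in\mathcal{N}} w^j_t$, and symmetrically the largest surviving value is at most $\max_{j\in\mathcal{N}} w^j_t$. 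Hence every kept value, including any contributed by a Byzantine neighbor, lies in the interval $[\min_{j\in\mathcal{N}} w^j_t,\ \max_{j\in\mathcal{N}} w^j_t]$ and can be written as a convex combination of the two normal agents attaining this minimum and maximum.

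Substituting these convex representations for the (at most $f$) surviving Byzantine values into the trimmed average yields $\tilde w^i_t=\sum_{j\in\mathcal{N}} A_t(i,j)\, w^j_t$ with $A_t(i,j)\ge 0$ and $\sum_{j\in\mathcal{N}} A_t(i,j)=1$, so $A_t$ is row-stochastic and supported on normal agents only. Next I would lower-bound the positive entries: every normal value that directly survives the trim carries weight exactly $\tfrac{1}{n-2f}$, while each bracketed Byzantine survivor only adds further nonnegative weight onto the extreme normal agents. Taking $\beta=\tfrac{1}{n-2f}$, which is a fixed constant since $n$ and $f$ are fixed, I would collect for each $i$ the set $S_i\subseteq\mathcal{N}$ of normal in-neighbors on which $A_t(i,\cdot)\ge\beta$. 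The edge set $\{(j,i):j\in S_i\}$ is exactly what remains after deleting all links incident to $\mathcal{F}$ and then deleting the (at most $f$) normal incoming links trimmed at $i$; by Definition~\ref{def: red_gra} this specifies a graph $\mathcal{H}[t]$ with connectivity matrix $H[t]$, and by construction $\beta H[t]\le A_t$ entrywise. Since only finitely many reduced graphs exist, $\mathcal{H}[t]\in R_{\mathcal{F}}$.

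The step I expect to be the main obstacle is the combinatorial accounting that guarantees $|S_i|\ge|\mathcal{N}|-f$, i.e.\ that \emph{at most} $f$ normal incoming links are effectively removed at each node, so that $\mathcal{H}[t]$ is a \emph{legitimate} reduced graph rather than one that deletes up to $2f$ edges. This requires carefully tracking how the weight freed by trimming and by bracketing the Byzantine survivors is reassigned onto the minimum- and maximum-achieving normal agents, together with the tie-breaking rule in the sort, so that a single uniform $\beta>0$ works for every $t$ and every admissible placement of $\mathcal{F}$. Once this accounting is in place, row-stochasticity and the entrywise bound follow directly, completing the derivation; the finiteness of $R_{\mathcal{F}}$ then lets us use the same $\beta$ uniformly over $t$.
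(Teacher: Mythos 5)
The paper never proves this lemma at all---it is imported verbatim from \cite{Vai_12}, which is why it is stated as ``From \cite{Vai_12}''---so your attempt has to stand on its own, and it has a genuine gap: precisely the step you flag as ``the main obstacle.'' What you establish correctly is the sandwich property (among the $f+1$ smallest, resp.\ largest, received values at least one is normal, so every surviving value lies in $[\min_{j\in\mathcal{N}}w^j_t,\max_{j\in\mathcal{N}}w^j_t]$) and, from it, that the trimmed mean admits a representation $\tilde w^i_t=\sum_{j\in\mathcal{N}}A_t(i,j)w^j_t$ with $A_t$ row-stochastic and supported on the normal agents. What fails is the accounting, and for \emph{your} $A_t$ it is not a fixable bookkeeping detail: your construction makes the claim false. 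The trim deletes $2f$ values, and all $2f$ of them can belong to normal agents. Concretely, let the actual number of Byzantine agents be $q=0$ (allowed, since $|\mathcal{F}|\le f$) with generic distinct values: then there are no Byzantine survivors to bracket, your $A_t$ places weight exactly $\frac{1}{n-2f}$ on the $n-2f$ surviving agents and \emph{zero} on the $2f$ trimmed normal agents, while every graph in $R_{\mathcal{F}}$ retains at least $n-1-f$ in-links per node plus the self-loop. Since $2f>f$, no $H[t]$ can satisfy $\beta H[t]\le A_t$. The failure persists with $q=f$ Byzantine agents sending values strictly inside the surviving range: $2f$ normal agents are trimmed, yet all redistributed weight lands on at most \emph{two} normal agents (the global min and max), so your set $S_i$ has at most $|\mathcal{N}|-2f+2$ elements, short of the required $|\mathcal{N}|-f$ once $f\ge 3$; and even for $f=1,2$ the bracket recipients can receive an arbitrarily small share of the weight, below your $\beta=\frac{1}{n-2f}$.

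The missing idea is that the transition-matrix representation is not unique, and the proof must exploit this freedom: one must also split the weight of \emph{surviving normal} values and reassign part of it, in a value-dependent way, onto trimmed normal agents. The standard route pairs trimmed-low normal agents with trimmed-high normal agents into disjoint pairs; every surviving value lies between the two members of any such pair, hence can be written as a convex combination of them, and at least one member of each pair inherits at least half of the donated weight. A counting argument (the number of trimmed normal agents in excess of $f$ never exceeds the number of such disjoint low--high pairs, nor the number of available donors) then shows that all but at most $f$ normal in-links per row can be covered; this value-dependent reassignment is the substance of the cited result. It also explains why the paper's quoted constant has the form $\beta\ge\frac{1}{2f(f+1)}$ rather than $\frac{1}{n-2f}$: a splitting factor must be built into $\beta$ because a bracketed value may sit arbitrarily close to one endpoint of its bracket. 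Finally, note that since $H(i,i)=1$, the lemma also demands $A_t(i,i)\ge\beta$; your construction does not address this when agent $i$'s own value is itself trimmed, which is one reason the update analyzed in \cite{Vai_12} never discards the agent's own value.
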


For the detailed expression of $\beta$, see \cite{Vai_12}. A lower bound for $\beta$ is $\frac{1}{2f(f+1)}$, i.e. $\frac{1}{2f(f+1)}\le\beta$.
Recall the definition of reduced graph in Definition \ref{def: red_gra} and that $\tau$ is the cardinality of the set of reduced graph.

\begin{lemma}
In the product below of $H(t)$ matrices for consecutive $\tau |\mathcal{N}|$ iterations, at least one column is non-zero.
$$\prod_{t=k}^{k+\tau |\mathcal{N}|-1}H[t].$$
\label{lem: H_all_one}
\end{lemma}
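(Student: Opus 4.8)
The plan is to reduce the time-varying product to a single matrix power via a pigeonhole argument and a support-monotonicity property, and then read off a fully positive column from the source-component structure of reduced graphs. Throughout, ``a column is non-zero'' must be read in the strong sense that \emph{all} of its entries are strictly positive: since each $H[t]$ satisfies $H[t](i,i)=1$, every column already carries a positive diagonal entry, so the literal reading of the statement would be vacuous.

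\textbf{Step 1 (pigeonhole over reduced graphs).} In the window $t=k,\dots,k+\tau|\mathcal{N}|-1$, each $H[t]$ is the connectivity matrix of one of the $\tau=|\mathcal{R}_{\mathcal{F}}|$ reduced graphs. As $\tau|\mathcal{N}|$ matrices are chosen from a set of size $\tau$, some reduced graph $\mathcal{H}^{\star}$, with connectivity matrix $H^{\star}$, occurs at least $|\mathcal{N}|$ times in the window.

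\textbf{Step 2 (support monotonicity).} Each $H[t]$ is entrywise non-negative with unit diagonal. For non-negative $M$ and non-negative $C$ with strictly positive diagonal one has $(MC)(i,j)\ge M(i,j)\,C(j,j)$ and $(CM)(i,j)\ge C(i,i)\,M(i,j)$, so left- or right-multiplication by such a $C$ can never turn a positive entry into a zero. Writing the full product as $X_0 H^{\star} X_1 H^{\star}\cdots H^{\star} X_m$, where $m\ge|\mathcal{N}|$ and each block $X_\ell$ is a (possibly empty) product of the remaining non-negative, unit-diagonal factors, and bounding each $X_\ell$ below by its diagonal, we obtain $\mathrm{supp}\big(\prod_{t}H[t]\big)\supseteq\mathrm{supp}\big((H^{\star})^{m}\big)$. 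It therefore suffices to produce a fully positive column in $(H^{\star})^{m}$.

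\textbf{Step 3 (source component).} By the reduced-graph theory of \cite{Vai_12,SuVai_15c}, which applies here because the network is complete and $n\ge 3f+1$ by Assumption~\ref{ass: Byzantine_upper_bound}, every reduced graph possesses a source component: a node $s^{\star}$ whose value reaches every normal agent along a directed chain of length at most $|\mathcal{N}|-1$. Since $H^{\star}$ has unit diagonal, any such chain can be padded to length exactly $m$ with self-loops, so in the transition-matrix convention driving the consensus recursion the $(i,s^{\star})$ entry of $(H^{\star})^{m}$ is strictly positive for every $i\in\mathcal{N}$ as soon as $m\ge|\mathcal{N}|-1$. Thus column $s^{\star}$ of $(H^{\star})^{m}$, and hence by Step 2 the corresponding column of $\prod_{t=k}^{k+\tau|\mathcal{N}|-1}H[t]$, is entirely positive, which is the claim.

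The main obstacle is Step 3: one must transcribe the source-component property of reduced graphs into the correct matrix-power statement, in particular tracking the orientation convention so that ``the source influences every node'' materializes as a positive \emph{column} (rather than a positive row) of the product used in the consensus iteration, and verifying the $|\mathcal{N}|-1$ reachability bound. Step 2 is routine but needs the careful observation that inserting unit-diagonal non-negative factors between the copies of $H^{\star}$ can only enlarge the support, which is exactly what lets us collapse the interspersed time-varying product down to the single power $(H^{\star})^{m}$.
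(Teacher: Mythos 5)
Your proof is correct and follows essentially the same route as the paper's: pigeonhole to extract a reduced-graph matrix $H^{\star}$ occurring at least $|\mathcal{N}|$ times, collapse of the interspersed product to the power $(H^{\star})^{m}$ via the unit diagonals, and the source-component property of reduced graphs under $n\ge 3f+1$. If anything, your Steps 2--3 are more careful than the paper's own argument, which compresses the collapse into the (literally incorrect) remark that the unit-diagonal matrices \emph{commute} and leaves the source-component fact entirely implicit.
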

\begin{proof}
Since the above product contains $\tau |\mathcal{N}|$ matrices, at least one matrix, say $H^{*}$, appeared at least $N$ times. In addition, since all diagonal elements of $H$ matrices are 1, so they are commute. Based on these observations, the claim in the lemma holds.
\end{proof}
Let us define a sequence of matrices $Q(t)$ such that each of these matrices is a product of $\tau N$ of $A_t$ matrices. Specifically,
$$Q(t)=\prod_{k=t}^{t+\tau |\mathcal{N}|-1}A_t.$$
In addition, define metrics to measure the similarity among the rows of a row stochastic matrix $X$ as follows
\begin{align}
\delta(X):&=\max_{j}\max_{i_1,i_2}|X_{i_1,j}-X_{i_2,j}| \nonumber \\
\lambda(X):&=1-\min_{i_1,i_2}\sum_{j}\min\{X_{i_1,j},X_{i_2,j}\}. \nonumber
\end{align}
There exists the following inequality between the defined quantities above.
\begin{lemma}
For any $m$ square row stochastic matrices $X(1),X(2),\cdots,X(m)$,
\begin{align}
\delta(X(1)X(2)\cdots X(m))\le \prod_{i=1}^{m}\lambda(X(i)). \nonumber
\end{align}
\label{lem: del_lam}
\end{lemma}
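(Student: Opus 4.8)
The plan is to reduce everything to pairwise comparisons of rows and then to establish a single one-step contraction inequality that I can iterate. First I would record the elementary identity that for any two rows $p,q$ of a row-stochastic matrix (i.e.\ probability vectors), the pairwise quantity $1-\sum_j\min\{p_j,q_j\}$ equals the total-variation distance $\tfrac12\sum_j|p_j-q_j|$, so that $\lambda(X)=\max_{i_1,i_2}\big(1-\sum_j\min\{X_{i_1,j},X_{i_2,j}\}\big)=\tfrac12\max_{i_1,i_2}\sum_j|X_{i_1,j}-X_{i_2,j}|$. Since also $\delta(X)=\max_{i_1,i_2}\max_j|X_{i_1,j}-X_{i_2,j}|$, both $\delta$ and $\lambda$ are maxima of the corresponding pairwise functionals over all row pairs, and I can work one pair at a time.

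As a base case I would prove $\delta(X)\le\lambda(X)$. Fixing a pair of rows $p,q$, I split the index set into $\{j:p_j>q_j\}$ and its complement; since $\sum_j(p_j-q_j)=0$, the positive part $\sum_{p_j>q_j}(p_j-q_j)$ and the negative part $\sum_{q_j>p_j}(q_j-p_j)$ are equal, each equal to the total-variation distance and hence to the pairwise value $\lambda_{\mathrm{pair}}(p,q)$. Any single coordinate difference $|p_j-q_j|$ is bounded by whichever of these two sums carries the matching sign, so $\max_j|p_j-q_j|\le\lambda_{\mathrm{pair}}(p,q)$; taking the maximum over pairs gives $\delta(X)\le\lambda(X)$.

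The heart of the argument is the one-step contraction $\delta(AB)\le\lambda(A)\,\delta(B)$ for row-stochastic $A,B$. I would write $(AB)_{i_1,j}-(AB)_{i_2,j}=\sum_k d_k\,B_{k,j}$ with $d_k=A_{i_1,k}-A_{i_2,k}$, decompose $d_k=d_k^{+}-d_k^{-}$ into nonnegative parts, and use $\sum_k d_k=0$ to conclude $\sum_k d_k^{+}=\sum_k d_k^{-}=:S\le\lambda(A)$. Dividing through by $S$ turns $\sum_k d_k^{+}B_{k,j}/S$ and $\sum_k d_k^{-}B_{k,j}/S$ into two convex combinations of the single column $\{B_{k,j}\}_k$, so their difference is at most the spread $\max_{k_1,k_2}|B_{k_1,j}-B_{k_2,j}|\le\delta(B)$; hence $|(AB)_{i_1,j}-(AB)_{i_2,j}|\le S\,\delta(B)\le\lambda(A)\,\delta(B)$, and maximizing over $i_1,i_2,j$ yields the claim.

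Finally I would iterate: applying the contraction with $A=X(1)$ and $B=X(2)\cdots X(m)$ and peeling off one factor at a time gives $\delta(X(1)\cdots X(m))\le\lambda(X(1))\cdots\lambda(X(m-1))\,\delta(X(m))$, and the base case $\delta(X(m))\le\lambda(X(m))$ closes the product. I expect the main obstacle to be the contraction step, specifically justifying that the difference of the two probability-weighted averages of a fixed column is controlled by that column's spread and hence by $\delta(B)$; the $d_k^{+}/d_k^{-}$ bookkeeping together with $\sum_k d_k=0$ is what makes this clean, and it is precisely here that the definition of $\lambda$ as one minus the minimal row overlap is used.
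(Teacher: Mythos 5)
Your proof is correct, but it is worth noting that the paper itself does not prove this lemma at all: it defers entirely to the classical reference of Hajnal and Bartlett \cite{HajBar_58}, and your argument is essentially a reconstruction of that classical proof of Hajnal's weak-ergodicity inequality. Your three steps are all sound: (i) the identity $1-\sum_j\min\{p_j,q_j\}=\tfrac12\sum_j|p_j-q_j|$ for probability vectors, which gives both the base case $\delta(X)\le\lambda(X)$ and the interpretation of $\lambda$ as a maximal pairwise total-variation distance; (ii) the one-step contraction $\delta(AB)\le\lambda(A)\,\delta(B)$, where the decomposition $d_k=d_k^{+}-d_k^{-}$ with $\sum_k d_k^{+}=\sum_k d_k^{-}=:S\le\lambda(A)$ and the convex-combination bound on the spread of each column of $B$ do exactly the work needed; and (iii) peeling off factors one at a time and closing with the base case. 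The only point you should make explicit is the degenerate case $S=0$ in step (ii): there you divide by $S$, but if $S=0$ the two rows of $A$ coincide, so the corresponding row difference of $AB$ vanishes and the bound holds trivially. What your self-contained argument buys over the paper's bare citation is transparency about hypotheses: the inequality needs nothing beyond row-stochasticity of the factors (no irreducibility, aperiodicity, or positivity), which matters in this paper because the matrices $A_t$ induced by the trimmed-mean consensus step are only guaranteed to be row stochastic, with positivity available only on the reduced-graph columns via Lemma~\ref{lem: A_low_bou}.
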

The proof of Lemma \ref{lem: del_lam} is provided in \cite{HajBar_58}. An important observation for $\lambda(X)$ is that if there exists a non-zero column for $X$, say all elements a bounded above from 0 by at least $\xi>0$, then $\lambda(X)\ge 1-\xi$. In addition, let $\bar{X}=\frac{1}{N}\mathbf{1}^{1}X$, we introduce the following lemma for the $\ell_2$ distance between $X$ and $\mathbf{1}\bar{X}$ in terms of $\delta(X)$.
\begin{lemma}
For any $X\in\mathbb{R}^{n\times n}$, we have
\begin{align}
\|X-\mathbf{1}\bar{X}\|\le n\delta(X) 
\end{align}
\label{lem: delta_2}
\end{lemma}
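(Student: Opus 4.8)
The plan is to reduce everything to an entrywise bound and then pass from the entrywise bound to the $\ell_2$-induced norm via the Frobenius norm. The key structural fact is that $\mathbf{1}\bar{X}$ is the matrix each of whose rows equals the vector of column averages of $X$, so that $X-\mathbf{1}\bar{X}$ is exactly the matrix of deviations of each entry from its column mean, and such deviations are controlled by the intra-column spread $\delta(X)$.

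First I would write out the $(i,j)$-entry explicitly. Since $\bar{X}=\frac{1}{n}\mathbf{1}^{T}X$ is the row vector whose $j$-th coordinate is the column average $\bar{X}_j=\frac{1}{n}\sum_{k=1}^{n}X_{k,j}$, the product $\mathbf{1}\bar{X}$ is the $n\times n$ matrix every row of which equals $\bar{X}$. Hence
\begin{align}
(X-\mathbf{1}\bar{X})_{i,j}=X_{i,j}-\bar{X}_j=\frac{1}{n}\sum_{k=1}^{n}\left(X_{i,j}-X_{k,j}\right). \nonumber
\end{align}
This rewriting exhibits each entry of $X-\mathbf{1}\bar{X}$ as an average of pairwise column differences, which is precisely the quantity that $\delta(X)$ is designed to bound.

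Next I would apply the triangle inequality together with the definition of $\delta(X)=\max_{j}\max_{i_1,i_2}|X_{i_1,j}-X_{i_2,j}|$. Each summand satisfies $|X_{i,j}-X_{k,j}|\le\delta(X)$, so averaging over $k$ yields $|(X-\mathbf{1}\bar{X})_{i,j}|\le\delta(X)$ for every pair $(i,j)$. Thus all $n^2$ entries of $X-\mathbf{1}\bar{X}$ are uniformly bounded in magnitude by $\delta(X)$. Finally, I would invoke the elementary comparison between the $\ell_2$-induced (spectral) norm and the Frobenius norm, $\|M\|\le\|M\|_F$, to conclude
\begin{align}
\|X-\mathbf{1}\bar{X}\|\le\sqrt{\sum_{i,j}(X-\mathbf{1}\bar{X})_{i,j}^{2}}\le\sqrt{n^{2}\,\delta(X)^{2}}=n\,\delta(X). \nonumber
\end{align}

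There is no serious obstacle here; the argument is short and elementary. The only point requiring a moment of care is the entrywise bound: one must recognize that the column mean $\bar{X}_j$ is a convex combination of the column entries and therefore lies within their range, which is why the deviation of any single entry from the mean is controlled by the full intra-column spread $\delta(X)$ rather than by some larger quantity. Once that observation is in place, the Frobenius-to-spectral norm inequality closes the proof immediately.
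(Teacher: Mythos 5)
Your proof is correct and follows essentially the same route as the paper's: bound each entry $|X_{i,j}-\bar{X}_j|$ by $\delta(X)$ via the average-of-pairwise-differences rewriting and the triangle inequality, then pass to the matrix norm through the Frobenius bound $\sqrt{n^{2}\delta(X)^{2}}=n\delta(X)$. If anything, you are slightly more careful than the paper, which writes the $\ell_2$-induced norm as \emph{equal} to the Frobenius expression, whereas you correctly invoke the inequality $\|M\|\le\|M\|_F$.
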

\begin{proof}
By definition of $\delta(X)$, we know that any two entries from the same column deviate from each other at most $\delta(X)$. Let $\bar{x}_i$ be the $i$-th element of vector $\bar{X}$. Also, we have
\begin{align}
|X_{i,j}-\bar{X}_j|=|X_{i,j}-\frac{1}{n}\sum_{k=1}^{n}X_{k,j}|&=|\frac{1}{n}\sum_{i=1}^{n}(X_{i,j}-X_{k,j})| \nonumber \\
&\le \frac{1}{n}\sum_{i=1}^{n}|X_{i,j}-X_{k,j}| \nonumber \\
&\le \delta(X)\nonumber. 
\end{align}
For $\ell_2$ norm, we have
\begin{align}
\|X-\mathbf{1}\bar{X}\|=\sqrt{\sum_{i,j}(X_{ij}-\bar{X}_j)^{2}}\le \sqrt{N^{2}(\delta(X))^{2}}\le n\delta(X). \nonumber 
\end{align}
\end{proof}
\begin{lemma}
For any $t$, matrix $Q(t)$ is a row stochastic matrix with $\lambda(Q(t))\le 1-\rho$, where $\rho:=\frac{1}{\beta^{\tau N}}>0$ is a strictly positive number.
\label{lem: lambda_Q}
\end{lemma}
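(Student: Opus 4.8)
The plan is to establish the two assertions separately: that $Q(t)$ is row stochastic, and that $\lambda(Q(t)) \le 1-\rho$. Row stochasticity is the easy part. Each factor $A_k$ produced by the $f$-trimmed-mean consensus update, restricted to the normal agents through the reduced-graph construction of \cite{Vai_12}, is row stochastic; since a product of finitely many row-stochastic matrices is again row stochastic and $Q(t)=\prod_{k=t}^{t+\tau|\mathcal{N}|-1}A_k$ is exactly such a product of $\tau|\mathcal{N}|$ factors, $Q(t)$ is row stochastic. The substantive claim is the bound on $\lambda$, and the key idea is to exhibit a single column of $Q(t)$ all of whose entries are bounded below by a common positive constant; by the definition of $\lambda$ this forces $\lambda(Q(t))$ below $1$ by that constant.

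For the $\lambda$ bound I would chain the two available lemmas. By Lemma~\ref{lem: A_low_bou}, for each $k$ there is a connectivity matrix $H[k]$ with $A_k \ge \beta H[k]$ entrywise, all matrices being nonnegative. Using the monotonicity of matrix multiplication under the entrywise order for nonnegative matrices (proved by induction from the fact that $AB \ge A'B'$ whenever $A\ge A'\ge 0$ and $B\ge B'\ge 0$), I would obtain
\begin{align}
Q(t) \;=\; \prod_{k=t}^{t+\tau|\mathcal{N}|-1} A_k \;\ge\; \beta^{\tau|\mathcal{N}|}\prod_{k=t}^{t+\tau|\mathcal{N}|-1} H[k] \nonumber
\end{align}
entrywise. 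By Lemma~\ref{lem: H_all_one}, the product $\prod_{k} H[k]$ has a column, say $j^{\ast}$, whose entries are all positive; since these are $\{0,1\}$-matrices, each such entry is at least $1$. Hence every entry of column $j^{\ast}$ of $Q(t)$ is at least $\beta^{\tau|\mathcal{N}|}=:\rho>0$, which lies in $(0,1)$ because $0<\beta<1$. Finally, for any rows $i_1,i_2$ we have $\min\{Q(t)_{i_1,j^{\ast}},Q(t)_{i_2,j^{\ast}}\}\ge \rho$, so $\sum_j \min\{Q(t)_{i_1,j},Q(t)_{i_2,j}\}\ge\rho$; minimizing over $i_1,i_2$ and invoking $\lambda(X)=1-\min_{i_1,i_2}\sum_j\min\{X_{i_1,j},X_{i_2,j}\}$ yields $\lambda(Q(t))\le 1-\rho$.

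The main obstacle is upgrading the ``non-zero column'' supplied by Lemma~\ref{lem: H_all_one} to an entrywise \emph{strictly positive} column, since only a strictly positive column produces a uniform lower bound across all rows. This is exactly where the reduced-graph structure is needed: because $n\ge 3f+1$, every reduced graph has a source component that reaches all remaining normal nodes within $|\mathcal{N}|$ hops, and since the repeated connectivity matrix $H^{\ast}$ appears at least $|\mathcal{N}|$ times while every $H[k]\ge I$ (so replacing any factor by the identity only decreases the product entrywise), the product dominates $(H^{\ast})^{|\mathcal{N}|}$, which has a fully positive column by that reachability property. I would therefore spell out this reachability argument carefully, as it is the step on which the positivity of $\rho$ rests.
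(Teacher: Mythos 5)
Your proof is correct and follows essentially the same route as the paper's: row stochasticity from the product structure, the entrywise bound $Q(t)\ge \beta^{\tau|\mathcal{N}|}\prod_{k}H[k]$ via Lemma~\ref{lem: A_low_bou} and monotonicity of products of nonnegative matrices, an entrywise-positive column from Lemma~\ref{lem: H_all_one}, and then the definition of $\lambda$. Your write-up is in fact tighter than the paper's on two points: you correctly take $\rho=\beta^{\tau|\mathcal{N}|}$ (the lemma's stated $\rho=1/\beta^{\tau N}$ is a typo, as it would exceed $1$), and you conclude $\lambda(Q(t))\le 1-\rho$ where the paper's proof ends with the inequality accidentally reversed.
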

\begin{proof}
Since $Q(t)$ is product of a series of row stochastic matrices $A_t$, the product $Q(t)$ is a row stochastic matrix.
From Lemma \ref{lem: A_low_bou}, we have 
$$\beta H[t]\le A_t.$$
As a result, we further have 
\begin{align}
\beta^{\tau |\mathcal{N}|}\prod_{k=t}^{t+\tau |\mathcal{N}|-1}H[t]\le\prod_{k=t}^{t+\tau |\mathcal{N}|-1}A_t =Q(t).
\label{eq: Q_t}
\end{align}
By Lemma \ref{lem: H_all_one}, there exists a column for the LHS of Eq.~ \eqref{eq: Q_t}, where all elements are strictly positive, more specifically, not smaller than $\rho$. As a result, we have $\lambda(Q(t))\ge 1-\rho$. In addition, this inequality holds for all $t\ge 0$.
\end{proof}

\subsection{Proof of Theorem \ref{thm: consensus}}
\begin{proof}
For simplicity, we use notation $\phi_{k}:=\phi(s_{k})$. For the parameter iterative update, we have 
\begin{align}
&w^{i}_{k+1} \nonumber \\
&=\sum_{j\in\mathcal{N}} A_k(i,j)\cdot w^{j}_k+\eta_k(r^{i}_{k+1}+\gamma \phi_{k+1}w^{i}_k-\phi_{k}w^{i}_k)\phi_{k} \nonumber \\
&=\sum_{j\in\mathcal{N}} A_k(i,j)\cdot w^{j}_k+\eta_k\phi_{k}(\gamma \phi_{k+1}-\phi_{k}) w^{i}_k+\eta_kr^{i}_{k+1}\phi_{k} .
\label{eq: w_i}
\end{align}
Then, we consider the dynamics of the vector $w_k:=(w^{1}_k,\cdots,w^{|\mathcal{N}|}_k)^{T}$ and we have
\begin{align}
&w_{k+1} \nonumber \\
&=A_k w_k+\eta_k w_k\phi_{k}(\gamma \phi_{k+1}-\phi_{k}) +\eta_k r_{k+1}\phi_{k} \nonumber \\
&=\prod_{l=0}^{k}A_lw_0+\sum_{l=0}^{k}\eta_l b_l \prod_{n=l+1}^{k}A_n w_l+\sum_{l=0}^{k}\eta_l\prod_{n=l+1}^{k}A_n C_l
\label{eq: w_d}
\end{align}
where $r_{k}=(r^{1}_{k},\cdots,r^{|\mathcal{N}|}_{k})^{T}$, in addition we defined $b_l=\phi_{l}(\gamma \phi_{l+1}-\phi_{l})$ and $C_l=r_{l+1}\phi_{l}$.

For $b_l$ and $C_l$, by Assumptions \ref{ass: r_bou} and \ref{ass: fea},we have
\begin{align}
|b_l|\le 2, \qquad \|C_l\|\le r_{\max}.  \nonumber
\end{align}

The dynamics of \eqref{eq: w_d} for $k$-th iteration is 
\begin{align}
w_k=\prod_{l=0}^{k-1}A_lw_0+\sum_{l=0}^{k-1}\eta_l b_l\prod_{n=l+1}^{k-1}A_n w_l+\sum_{l=0}^{k-1}\eta_l\prod_{n=l+1}^{k-1}A_n C_l. \nonumber
\end{align}
Let $\bar{w}_k=\frac{1}{N}\sum_{i\in\mathcal{N}} w^{i}_k$. Then, for the average dynamic, we have
\begin{align}
\bar{w}_k
&=p^{T}_{0,k-1}w_0+\sum_{l=0}^{k-1}\eta_l b_l p^{T}_{l+1,k-1} w_l+\sum_{l=0}^{k-1}\eta_l p^{T}_{l+1,k-1}C_l.
\label{eq: w_bar} 
\end{align}
where $p_{l,k}$ is defined as 
\begin{align}
p_{k,t}=\frac{1}{N}\mathbf{1}^{T}\prod_{l=k}^{t}A_l. \nonumber
\end{align}
Then, by Lemma \ref{lem: lambda_Q}, we have
\begin{align}
\delta(\prod_{l=k}^{t}A_l)&\le \prod_{l=k}^{t}\lambda (A_l) \nonumber \\
&\le \prod_{l=k}^{\lfloor \frac{t-k}{\tau |\mathcal{N}|}\rfloor}\lambda (Q_l) \nonumber \\
&\le (1-\rho)^{\lfloor \frac{t-k}{\tau |\mathcal{N}|}\rfloor}.
\end{align}
Recall that $\rho=\beta^{\tau |\mathcal{N}|}$.
For $\ell_2$ norm, by Lemma \ref{lem: delta_2}, we have
\begin{align}
\|\prod_{l=k}^{t}A_l-\mathbf{1}p^{T}_{k,l}\|&\le |\mathcal{N}|\delta(\prod_{l=k}^{t}A_l) \nonumber \\
&\le |\mathcal{N}|(1-\rho)^{\lfloor \frac{t-k}{\tau |\mathcal{N}|}\rfloor} \nonumber \\
&\le |\mathcal{N}|\zeta^{t-k},
\label{eq: 2_norm_dis}
\end{align}
where $\zeta$ is defined as $\zeta=(1-\rho)^{\lfloor \frac{1}{\tau |\mathcal{N}|}\rfloor}$.

For consensus error, we have
\begin{align}
&\TwoNorm{w_k-\bar{w}_k\mathbf{1}} \nonumber \\
=&\left\|\prod_{l=0}^{k-1}A_lw_0+\sum_{l=0}^{k-1}\eta_l b_l\prod_{n=l+1}^{k-1}A_n w_l+\sum_{l=0}^{k-1}\eta_l\prod_{n=l+1}^{k-1}A_n C_l-\mathbf{1}\cdot p^{T}_{0,k-1}w_0-\sum_{l=0}^{k-1}\eta_l b_l \mathbf{1}\cdot p^{T}_{l+1,k-1}w_l-\sum_{l=0}^{k-1}\eta_l\mathbf{1}\cdot p^{T}_{l+1,k-1}C_l\right\| \nonumber \\
=&\left\|\left(\prod_{l=0}^{k-1}A_l-\mathbf{1}\cdot p^{T}_{0,k-1}\right)w_0+\sum_{l=0}^{k-1}\eta_l b_l\left(\prod_{n=l+1}^{k-1}A_n -\mathbf{1}\cdot p^{T}_{l+1,k-1}\right)w_l+\sum_{l=0}^{k-1}\eta_l\left(\prod_{n=l+1}^{k-1}A_n-\mathbf{1}\cdot p^{T}_{l+1,k-1}\right) C_l \right\| \nonumber \\
\le&\left\|\left(\prod_{l=0}^{k-1}A_l-\mathbf{1}\cdot p^{T}_{0,k-1}\right)w_0\right\|+\left\|\sum_{l=0}^{k-1}\eta_l b_l\left(\prod_{n=l+1}^{k-1}A_n-\mathbf{1}\cdot p^{T}_{l+1,k-1}\right) w_l\right\|+\left\|\sum_{l=0}^{k-1}\eta_l\left(\prod_{n=l+1}^{k-1}A_n-\mathbf{1}\cdot p^{T}_{l+1,k-1}\right) C_l \right\| \nonumber \\
\le&\underbrace{\left\|\prod_{l=0}^{k-1}A_l-\mathbf{1}\cdot p^{T}_{0,k-1}\right\|}_\text{term (a)}\cdot\|w_0\|+\sum_{l=0}^{k-1}\eta_l \underbrace{\|\prod_{n=l+1}^{k-1}A_n-\mathbf{1}\cdot p^{T}_{l+1,k-1}\|}_\text{term (b)}\cdot\|w_l\|\cdot|b_l| \nonumber \\
&+\sum_{l=0}^{k-1}\eta_l \underbrace{\|\prod_{n=l+1}^{k-1}A_n-\mathbf{1}\cdot p^{T}_{l+1,k-1}\|}_\text{term (b)}\cdot \|C_l \|.
\label{eq: consensus_error}
\end{align}
By Lemma \ref{lem: delta_2} and Eq.~\eqref{eq: 2_norm_dis}, we have bounds on term (a) and (b) respectively as follows
\begin{align}
\left\|\prod_{l=0}^{k-1}A_l-\mathbf{1}\cdot p^{T}_{0,k-1}\right\|\le |\mathcal{N}|\cdot \zeta^{k-1} \nonumber  \\
\left\|\prod_{n=l+1}^{k-1}A_n-\mathbf{1}\cdot p^{T}_{l+1,k-1}\right\|\le |\mathcal{N}|\cdot \zeta^{k-l-2}.
\end{align}
Recall that the step size is $\eta_l=\frac{1}{l}$, as a result, we have
\begin{align}
\lim_{k\to\infty}\sum_{l=0}^{k-1}\eta_l\xi^{k-l-2}=0. \nonumber
\end{align}
Then, consensus error in Eq.~\eqref{eq: consensus_error}, we have
\begin{align}
&\TwoNorm{w_k-\bar{w}_k\mathbf{1}} \nonumber \\
\le&|\mathcal{N}|\cdot \zeta^{k-1}\cdot\|w_0\|+\sum_{l=0}^{k-1}\eta_l |\mathcal{N}|\cdot \zeta^{k-l-2}\cdot\|w_l\|\cdot|b_l|+\sum_{l=0}^{k-1}\eta_l |\mathcal{N}|\cdot \zeta^{k-l-2}\cdot \|C_l \|.
\end{align}
Since we used projected TD learning, we have $\|w_l\|\le\sqrt{|\mathcal{N}|}R$. For projection step, by Lemma \ref{lem: proj}, we have
\begin{align}
\|\Pi_{2,R}(w_k)-\overline{\Pi_{2,R}(w_k)}\mathbf{1}\|\le \TwoNorm{w_k-\bar{w}_k\mathbf{1}}.
\end{align}
As a result for the consensus error in Eq.~\eqref{eq: consensus_error}, we have
\begin{align}
\lim_{k\to\infty}\TwoNorm{w_k-\bar{w}_k\mathbf{1}}=0 \nonumber
\end{align}
\end{proof}

Here, we introduce a lemma that states the coordinate-wise projection operator has a contraction property.
\begin{lemma}
For the coordinate-wise projection operator $\Pi_{2,R}$, we have 
\begin{align}
\|\Pi_{2,R}(w)-\mathbf{1}\overline{\Pi_{2,R}(w)}\|\le\|w-\mathbf{1}\bar{w}\| \nonumber
\end{align}
\label{lem: proj}
\end{lemma}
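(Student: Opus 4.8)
The plan is to exploit the fact that the coordinate-wise projection $\Pi_{2,R}$ acts on the stacked vector $w=(w^1,\dots,w^{|\mathcal{N}|})^T$ by applying the Euclidean projection onto the radius-$R$ ball to each agent's block independently. In the scalar case ($d=1$) this is simply the clipping map $w^i\mapsto\max\{-R,\min\{R,w^i\}\}$. The subtlety---and the reason we cannot invoke non-expansiveness of the projection directly against the reference point $\mathbf{1}\bar w$---is that the mean moves under projection, i.e.\ $\overline{\Pi_{2,R}(w)}\neq\Pi_{2,R}(\bar w)$ in general, so there is no fixed anchor to compare against. I would sidestep this by rewriting the deviation-from-mean norm purely in terms of pairwise differences, which are insensitive to the location of the mean.

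Concretely, the first step is the elementary identity that, for any $v=(v_1,\dots,v_{|\mathcal{N}|})^T$ with mean $\bar v$,
\begin{align}
\|v-\mathbf{1}\bar v\|^2=\frac{1}{2|\mathcal{N}|}\sum_{i,j\in\mathcal{N}}|v_i-v_j|^2, \nonumber
\end{align}
which follows by expanding both sides and using $\sum_{i}(v_i-\bar v)=0$. Applying this to $v=\Pi_{2,R}(w)$ and to $v=w$ reduces the claim to the term-by-term comparison $|\Pi_{2,R}(w)_i-\Pi_{2,R}(w)_j|\le|w^i-w^j|$ for every pair $(i,j)$.

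The second step supplies exactly this comparison: the Euclidean projection onto a closed convex set (here the radius-$R$ ball) is non-expansive, so $\|\Pi_{2,R}(x)-\Pi_{2,R}(y)\|\le\|x-y\|$ for all $x,y$; in the scalar case this is just the $1$-Lipschitz property of clipping. Summing the resulting inequalities over all ordered pairs and dividing by $2|\mathcal{N}|$ yields $\|\Pi_{2,R}(w)-\mathbf{1}\overline{\Pi_{2,R}(w)}\|^2\le\|w-\mathbf{1}\bar w\|^2$, and taking square roots completes the proof. The only conceptual obstacle is the mean-shift issue flagged above; once the pairwise-difference identity is in hand the argument is routine, and it extends verbatim to the vector ($d>1$) case by replacing the scalar differences with $\ell_2$ norms of block differences.
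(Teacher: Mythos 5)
Your proof is correct, and it takes a genuinely different route from the paper's. The paper argues by direct computation: it tracks the mean shift $\Delta/|\mathcal{N}|$ caused by clipping, splits coordinates into the clipped index sets $R^{+}=\{i:w^{i}>R\}$ and $R^{-}=\{i:w^{i}<-R\}$, expands the squared consensus error after projection into cross terms and squared terms, and finishes with a three-way case analysis on whether $\bar{w}$ lies inside $[-R,R]$, above $R$, or below $-R$. You instead invoke the variance identity
\[
\|v-\mathbf{1}\bar{v}\|^{2}=\frac{1}{2|\mathcal{N}|}\sum_{i,j\in\mathcal{N}}|v_{i}-v_{j}|^{2},
\]
which eliminates the mean from the statement entirely---exactly the quantity whose movement under projection is what makes the paper's computation delicate---and then concludes from the $1$-Lipschitz property of scalar clipping (non-expansiveness of projection onto a convex set), applied to every pair $(i,j)$. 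Both of your steps check out: the identity follows by expanding and using $\sum_{i}(v_{i}-\bar{v})=0$, and non-expansiveness of the projection is standard. What your route buys: no case analysis, no bookkeeping of the increments $\Delta w^{+}_{i}$ and $\Delta w^{-}_{i}$, and a proof that extends verbatim to the vector case $d>1$ with block $\ell_{2}$ norms, whereas the paper's index-set computation is tied to the scalar setting. What the paper's route buys: its explicit expansion exhibits the exact nonpositive correction terms (the cross terms and $-\Delta^{2}/|\mathcal{N}|$) by which the consensus error shrinks, which could in principle give a quantitative strict-contraction estimate, though the paper never exploits this.
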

\begin{proof}
For simplicity, we use $\Pi$ instead of $\Pi_{2,R}$.
Let $R^{+}$ be the index set $\{i|w^{i}>R\}$ and similarly $R^{-}=:\{i|w^{i}<-R\}$. Furthermore, let $\Delta w^{+}_i=w^{i}-R$ for $i\in R^{+}$ and $\Delta w^{-}_i=-R-w^{i}$ for $i\in R^{-}$. We note that both $\Delta w^{+}_i>0$ for $i\in R^{+}$ and $\Delta w^{-}_i>0$ for $i\in R^{-}$ by definition. Finally, $\Delta =-\sum_{i\in R^{+}}w^{+}_i+\sum_{i\in R^{-}}w^{-}_i$.

Then, the change of average before and after projection is $\frac{\Delta}{|\mathcal{N}|}$. In other words, we have
\begin{align}
\bar{w}+\frac{\Delta}{|\mathcal{N}|}=\overline{\Pi_{2,R}(w)}. \label{eq: ave_change}
\end{align}
For the square of consensus error after projection, we have 
\begin{align}
&\|\Pi(w)-\mathbf{1}\overline{\Pi(w)}\|^{2} \nonumber \\
=&\|\Pi(w)-\mathbf{1}(\bar{w}+\frac{\Delta}{|\mathcal{N}|})\|^{2} \nonumber \\
=&\underbrace{\sum_{i\in \mathcal{N}\setminus (R^{+}\bigcup R^{-})}(w^{i}-\bar{w}-\frac{\Delta}{|\mathcal{N}|})^{2}}_{\text{term (a)}}
+\underbrace{\sum_{i\in R^{+}}(w^{i}-\Delta w^{+}_i-\bar{w}-\frac{\Delta}{|\mathcal{N}|})^{2}}_{\text{term (b)}}
+\underbrace{\sum_{i\in R^{-}}(w^{i}+\Delta w^{-}_i-\bar{w}-\frac{\Delta}{|\mathcal{N}|})^{2}}_{\text{term (c)}} \nonumber 
\end{align}
For each of the term above, we have
\begin{align}
\text{term (a)}&=\sum_{i\in \mathcal{N}\setminus (R^{+}\bigcup R^{-})}(w^{i}-\bar{w})^{2}-2\sum_{i\in \mathcal{N}\setminus (R^{+}\bigcup R^{-})}(w^{i}-\bar{w})\frac{\Delta}{|\mathcal{N}|}+\sum_{i\in \mathcal{N}\setminus (R^{+}\bigcup R^{-})}\frac{\Delta^{2}}{|\mathcal{N}|^{2}} \\
\text{term (b)}&=\sum_{i\in R^{+}}(w^{i}-\bar{w})^{2}-2\sum_{i\in R^{+}}(w^{i}-\bar{w})(\Delta w^{+}_i+\frac{\Delta}{|\mathcal{N}|})+\sum_{i\in R^{+}}(\Delta w^{+}_i+\frac{\Delta}{|\mathcal{N}|})^{2} \\
\text{term (c)}&=\sum_{i\in R^{-}}(w^{i}-\bar{w})^{2}+2\sum_{i\in R^{-}}(w^{i}-\bar{w})(\Delta w^{-}_i-\frac{\Delta}{|\mathcal{N}|})+\sum_{i\in R^{-}}(\Delta w^{-}_i-\frac{\Delta}{|\mathcal{N}|})^{2}
\end{align}
The first terms in the term (a), (b) and (c) are the original consensus error. We now add the cross terms in term (a), (b) and (c). We have
\begin{align}
&-2\sum_{i\in \mathcal{N}\setminus (R^{+}\bigcup R^{-})}(w^{i}-\bar{w})\frac{\Delta}{|\mathcal{N}|}-2\sum_{i\in R^{+}}(w^{i}-\bar{w})(\Delta w^{+}_i+\frac{\Delta}{|\mathcal{N}|})+2\sum_{i\in R^{-}}(w^{i}-\bar{w})(\Delta w^{-}_i-\frac{\Delta}{|\mathcal{N}|}) \nonumber \\
=&-2\sum_{i\in R^{+}}(w^{i}-\bar{w})\Delta w^{+}_i+2\sum_{i\in R^{-}}(w^{i}-\bar{w})\Delta w^{-}_i.
\end{align}
We note that both terms above are negative. Similarly, we add the squared terms, then we have
\begin{align}
&\sum_{i\in \mathcal{N}\setminus (R^{+}\bigcup R^{-})}\frac{\Delta^{2}}{|\mathcal{N}|^{2}}+\sum_{i\in R^{+}}(\Delta w^{+}_i+\frac{\Delta}{|\mathcal{N}|})^{2}+\sum_{i\in R^{-}}(\Delta w^{-}_i-\frac{\Delta}{|\mathcal{N}|})^{2} \nonumber \\
=&\frac{\Delta^{2}}{|\mathcal{N}|}+2\frac{\Delta}{|\mathcal{N}|}(\sum_{i\in R^{+}}\Delta w^{+}_i-\sum_{i\in R^{-}}\Delta w^{-}_i)+\sum_{i\in R^{+}}(\Delta w^{+}_i)^{2}+\sum_{i\in R^{-}}(\Delta w^{-}_i)^{2} \nonumber \\
=&-\frac{\Delta^{2}}{|\mathcal{N}|}+\sum_{i\in R^{+}}(\Delta w^{+}_i)^{2}+\sum_{i\in R^{-}}(\Delta w^{-}_i)^{2}.
\end{align}
As a result, we have
\begin{align}
&\|\Pi(w)-\mathbf{1}\overline{\Pi(w)}\|^{2} \nonumber \\
=&\|w-\mathbf{1}\bar{w}\|^{2}-2\sum_{i\in R^{+}}(w^{i}-\bar{w})\Delta w^{+}_i+2\sum_{i\in R^{-}}(w^{i}-\bar{w})\Delta w^{-}_i-\frac{\Delta^{2}}{|\mathcal{N}|}+\sum_{i\in R^{+}}(\Delta w^{+}_i)^{2}+\sum_{i\in R^{-}}(\Delta w^{-}_i)^{2}.
\label{eq: proj_con_err}
\end{align}
We consider 3 cases: (1) $-R\le\bar{w}\le R$, (2) $R<\bar{w}$ and (3) $\bar{w}< -R$.

For case (1), it's easy to see that $w^{i}-\bar{w}\ge\Delta w^{+}_i$ for $i\in R^{+}$ and $w^{i}-\bar{w}\le -\Delta w^{-}_i$ for $i\in R^{-}$. As a result, the projection is indeed contraction.

For case (2), which implies $\Delta<0$, we have $R< \bar{w}\le R-\frac{\Delta}{|\mathcal{N}|}$. The RHS is because after projection, the average needs to be smaller or equal to $R$. Then, we have
\begin{align}
\sum_{i\in R^{+}}(w^{i}-\bar{w})\Delta w^{+}_i &\ge \sum_{i\in R^{+}}(\Delta w^{+}_i+\frac{\Delta}{|\mathcal{N}|})\Delta w^{+}_i \nonumber \\
&=\sum_{i\in R^{+}}(\Delta w^{+}_i)^{2}+\frac{\Delta}{|\mathcal{N}|}\sum_{i\in R^{+}}\Delta w^{+}_i 
\end{align}
And we also have 
\begin{align}
\sum_{i\in R^{-}}(w^{i}-\bar{w})\Delta w^{-}_i&\le \sum_{i\in R^{-}}(w^{i}-R)\Delta w^{-}_i \nonumber \\
&=  \sum_{i\in R^{-}}(w^{i}+R-2R)\Delta w^{-}_i \nonumber \\
&= \sum_{i\in R^{-}}(-\Delta w^{-}_i-2R)\Delta w^{-}_i \nonumber \\
&=-\sum_{i\in R^{-}}(\Delta w^{-}_i)^{2}-2R\sum_{i\in R^{-}} \Delta w^{-}_i. 
\end{align}
With the above two inequalities, we have the contraction. Similarly, it holds for case (3). Therefore, the proof is complete.
\end{proof}

\end{document}